\newcommand{\cmark}{\ding{51}}%
\newcommand{\xmark}{\ding{55}}%
\newcommand{\calP}{\mathcal{P}}
\newcommand{\calS}{\mathcal{S}}
\newcommand{\calG}{\mathcal{G}}
\newcommand{\bbR}{\mathbb{R}}
\newcommand{\ci}[1]{{\scriptsize $\pm$ #1}}
\newcommand{\ptot}{{\sc Param2Tok}}
\newcommand{\cnfp}{CNF~{(\ptot)}}
\newcommand{\cnfm}{CNF~{(MLP)}}
\newcommand{\cnfe}{CNF~{(Equivariant)}}
\newcommand{\st}{AST}
\newcommand{\vae}{VAE + RealNVP}
\newcommand{\ffnm}{FFN~{(MSE)}}
\newcommand{\ffnc}{FFN~{(Chamfer)}}
\newcommand{\ffns}{FFN~{(Sort)}}
\newcommand{\x}{\mathbf{x}}
\newcommand{\y}{\mathbf{y}}
\newcommand{\Z}{\mathbf{Z}}
\newcommand{\A}{\mathbf{A}}
\newcommand{\X}{\mathbf{X}}
\newcommand{\Y}{\mathbf{Y}}
\newtheorem{theorem}{Theorem}
\newtheorem{definition}{Definition}[section]
\title{Audio synthesizer inversion in symmetric parameter spaces with approximately equivariant flow matching}
\begin{document}

\maketitle

\begin{abstract}
Many audio synthesizers can produce the same signal given different parameter configurations, meaning the inversion from sound to parameters is an inherently ill-posed problem.
We show that this is largely due to intrinsic symmetries of the synthesizer, and focus in particular on permutation invariance.
First, we demonstrate on a synthetic task that regressing point estimates under permutation symmetry degrades performance, even when using a permutation-invariant loss function or symmetry-breaking heuristics.
Then, viewing equivalent solutions as modes of a probability distribution, we show that a conditional generative model substantially improves performance.
Further, acknowledging the invariance of the implicit parameter distribution, we find that performance is further improved by using a permutation equivariant continuous normalizing flow.
To accommodate intricate symmetries in real synthesizers, we also propose a relaxed equivariance strategy that adaptively discovers relevant symmetries from data.
Applying our method to Surge XT, a full-featured open source synthesizer used in real world audio production, we find our method outperforms regression and generative baselines across audio reconstruction metrics.
\end{abstract}

\section{Introduction}\label{sec:introduction}
\begin{figure}[t]
\centerline{\includegraphics[width=\columnwidth]{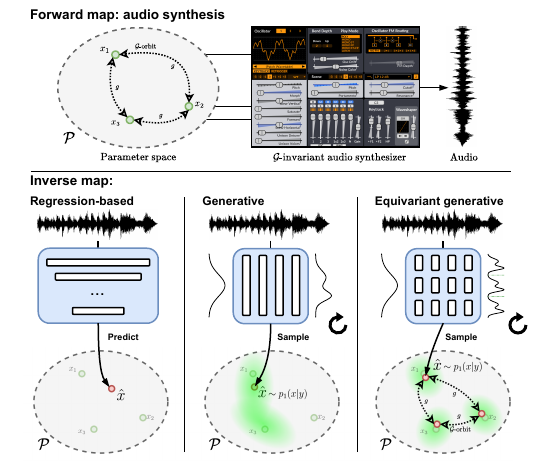}}
  \caption{\textbf{Top: } Audio synthesis is the forward map we wish to invert.
  \textbf{Bottom-left: } Synthesizer inversion by parameter regression. The neural network produces a point estimate, and does not account for symmetries in the synthesizer. \textbf{Bottom-middle: } A generative model approximates the conditional distribution of parameters $\mathbf{x}$ given audio $\mathbf{y}$, but can only learn the appropriate invariances if present in the data. \textbf{Bottom-right: } Using an equivariant flow, the learned distribution is inherently invariant to the symmetries of the synthesizer.}\label{fig:overview}
\vspace{-0.9\baselineskip}
\end{figure}

%

Modern audio synthesizers are intricate systems, combining numerous methods for sound production and manipulation with rich user-facing control schemes.
Where, once, many digital synthesis algorithms were accompanied by a corresponding analysis procedure~\cite{justice_analytic_1979,mcaulay_speech_1986,serra_spectral_1990}, selecting parameters of a modern synthesizer to approximate a given audio signal is a challenging open problem~\cite{roth_comparison_2011,shier_synthesizer_2021} which is increasingly approached using powerful optimization and machine learning algorithms.
In particular, recent works approach the task with deep neural networks trained on datasets sampled directly from the synthesizer~\cite{barkan_inversynth_2019,masuda_improving_2023,bruford_synthesizer_2024}.

Many synthesizers can produce the same signal given multiple different parameter configurations.
This means that inverting the synthesizer is necessarily \textit{ill-posed} --- it lacks a unique solution.
We argue that this harms the performance of models trained to produce point estimates of the parameters.
Maximum likelihood regression objectives that do not account for this ill-posedness are minimized by a suboptimal averaging across equivalent solutions, while invariant loss functions can lead to pathologies such as the \textit{responsibility problem}~\cite{hayes_responsibility_2023,zhang_fspool_2019}.
This, we suggest, explains the superior performance of generative methods when sound matching complex synthesizers~\cite{esling_flow_2020,vaillant_improving_2021} --- for a given input, they can assign predictive weight to many possible parameter configurations, as opposed to selecting just one, as illustrated in Fig.~\ref{fig:overview}.

The relationship between equivalent parameters is commonly governed by an underlying symmetry, which arises naturally from the design of the synthesizer.
For example, in an additive synthesizer consisting of $k$ independent oscillators, simple permutations yield $k!$ distinct yet equivalent parameter configurations.
In this work, we focus on the effects of such permutation symmetries, which frequently occurs in synthesizers due to the use of repeated functional units --- filters, oscillators, modulation sources, etc. --- in their design.
In such cases, we show that by constructing a permutation invariant generative model from equivariant continuous normalizing flows~\cite{kohler_equivariant_2020}, we can improve over the performance of symmetry-na{\"i}ve generative models.
Further, using a toy task in which we can selectively break the invariance of the synthesizer, we show that permutation symmetry degrades the performance of regression-based models.



In real synthesizers, multiple symmetries may act concurrently on different parameters, while some parameters remain unaffected.
Hand-designing a model to achieve the appropriate invariance thus scales poorly with synthesizer complexity and requires \textit{a priori} knowledge of the underlying symmetries.
Further, some synthesizers exhibit \textit{conditional} and \textit{approximate} symmetries, for which full invariance would be overly restrictive.
To address this, we introduce a learnable mapping from synthesizer parameters to model tokens, which is capable of discovering symmetries present in the data, but can break the symmetry where necessary.
Applying this technique to a dataset sampled from the Surge XT synthesizer with multiple symmetries, continuous and discrete parameters, and audio effects, we find consistently improved audio reconstruction performance.
%
We provide full source code and audio examples at the following URL: \url{https://benhayes.net/synth-perm/}

\section{Background}\label{sec:background}

\subsection{Synthesizer inversion \& sound matching}

Given an audio signal, the sound matching task aims to find a synthesizer parameter configuration that best approximates it~\cite{roth_comparison_2011,shier_synthesizer_2021}.
%
We focus in this paper on \textit{synthesizer inversion}, a sub-task of sound matching in which the audio signal we seek to approximate is known \textit{a priori} to have come from the synthesizer.
We do so to eliminate confounding factors due to the non-trivial implicit projection from general audio signals to the set of signals producible by the synthesizer.

For an overview of historical sound matching approaches, we refer the reader to Shier's~\cite{shier_synthesizer_2021} comprehensive review.
The state-of-the-art in regression-based approaches was recently proposed by Bruford et al~\cite{bruford_synthesizer_2024}, who proposed to adopt the audio spectrogram transformer~\cite{gong_ast_2021} architecture.
Given its superior performance over MLP and CNN models, we adopt this model as our regression baseline.

Esling et al~\cite{esling_flow_2020} presented the first generative approach, which was subsequently extended by Le Vaillant et al~\cite{vaillant_improving_2021}.
These methods train variational autoencoders on audio spectrograms, enriching the posterior distribution with normalizing flows.
A second flow is jointly trained with a regression loss to predict synthesizer parameters from this learned audio representation.



Differentiable digital signal processing (DDSP)~\cite{engel_ddsp_2020,hayes_review_2023} has also been applied to sound matching~\cite{masuda_synthesizer_2021,masuda_improving_2023,uzrad_diffmoog_2024,yang_white_2023,han_learning_2024}.
Such approaches are effectively regression-based, as the composition of a differentiable synthesizer and an audio-domain loss function is a parameter-domain loss function.
If the synthesizer exhibits an invariance, so will this composed loss function, meaning DDSP-based methods are also subject to the responsibility problem.
Thus, while we do not conduct specific DDSP experimentation, we expect our findings for permutation invariant loss functions to be applicable also to DDSP with permutation invariant synthesizers.

\subsection{Permutation symmetry \& set generation}

Predicting set-structured data (such as the parameters of a permutation invariant synthesizer) with vector-valued neural networks leads to a pathology known as the \textit{responsibility problem}~\cite{hayes_responsibility_2023,zhang_fspool_2019}, in which the continuous model must learn a highly discontinuous function.
Intuitively, it is always possible to find two similar inputs that induce a change in ``responsibility'', and hence an approximation to a discontinuity in the network's outputs.
Despite these issues, in synthesizer and audio processor inference tasks it is common to ignore the symmetry at the architectural level and simply adopt permutation invariant loss functions~\cite{engel_self-supervised_2020} or symmetry breaking heuristics~\cite{nercessian_neural_2020,masuda_improving_2023}.
However, such approaches are still subject to the responsibility problem, and thus do not solve the underlying issue.

A variety of methods have been proposed for set prediction, of which the most successful view the task generatively~\cite{zhang_set_2021,zhang_deep_2019,kosiorek_conditional_2020,kim_setvae_2021} by transforming an exchangeable sample to the target set.
Effectively, the task is framed as conditional density estimation over the space of sets~\cite{zhang_set_2021}.
Based on this insight, more general generative models such as continuous normalizing flows~\cite{bilos_scalable_2021} and diffusion models~\cite{wu_fast_2023} have been adapted to permutation invariant densities.
%
%

\subsection{Continuous normalizing flows}

Continuous normalizing flows (CNFs)~\cite{chen_neural_2018,grathwohl_ffjord_2018} are a family of powerful generative models which define invertible, continuous transformations between probability distributions.
The conditional flow matching~\cite{tong_improving_2024} framework allows us to train CNFs without expensive numerical integration by sampling a \textit{conditional} probability path and regressing a closed form vector field which, in expectation, recovers the exact same gradients as regressing over the marginal field~\cite{lipman_flow_2023,tong_improving_2024}.
In this work, we adopt the rectified flow~\cite{liu_flow_2023} probability path which we pair with a minibatch approximation to the optimal transport coupling~\cite{pooladian_multisample_2023}.
We build on prior work on equivariant flows~\cite{song_equivariant_2023,hassan_equivariant_2024,klein_equivariant_2023} which are known to produce samples from invariant distributions~\cite{kohler_equivariant_2020}.


\section{Method}\label{sec:method}


Let $\calP\subset\bbR^k$ be the space of synthesizer parameters\footnote{We include MIDI pitch and note on/off times in our definition of synthesizer parameters. Effectively, we are dealing in single notes.} and $\calS\subset\bbR^n$ be the space of audio signals. A synthesizer is a map between these spaces, $f: \calP \to \calS$.
It is common that $f$ is not injective.
That is, there exist multiple sets of parameters, e.g. $\mathbf{x}^{(1)}, \mathbf{x}^{(2)} \in \calP$, that produce the same signal, i.e. $f(\mathbf{x}^{(1)})=f(\mathbf{x}^{(2)})$.
A trivial example is given when the synthesizer has a global gain control ---
all sets of parameters with zero global gain will produce an equivalent, silent signal.
Clearly, in such cases, $f$ lacks a well-defined inverse.
Therefore, we model our uncertainty over $\mathbf{x}$ as $p(\mathbf{x} \mid \mathbf{y})$, the distribution of parameters $\mathbf{x}\in\calP$ given a signal $\mathbf{y}\in\calS$.

%
%

When there is some transformation --- let us denote this $g$ --- that acts on parameters such that $f(g\cdot \x) = f(\x)$ for all $\x \in \calP$, we say that $g$ is a symmetry of $f$.
For example, $g$ might permute the oscillators.
If a set $\calG$ of such transformations contains an identity transformation and an inverse for each $g\in\calG$, then under composition it is a group acting on $\calP$.
For any $\x\in\calP$, its $\calG$-orbit --- the set of points reachable via actions $g \in\calG$ --- is defined as $O_\x = \{g\cdot \x : g \in \calG\}$.
The set of all $\calG$-orbits is a disjoint partition of $\calP$:

\begin{equation}
  \calP = \bigsqcup_{O \in \calP / \calG} O,
  \label{eq:orbit-partition}
\end{equation}

\noindent 
This allows any parameter in $\calP$ to be expressed as $\x = g \cdot r_O$ for some orbit representative $r_O \in O$ and group element $g \in \calG$.\footnote{%
  This factorization is unique if and only if the stabilizer of $\x$ (i.e. the subgroup of $\calG$ that leaves $\x$ unchanged) is trivial.
}
Hence, we can factor our conditional parameter density as:\footnote{A full derivation is given in Appendix~\ref{apdx:deriv}.}


\begin{equation}
  \begin{aligned}
    p(\x \mid \y) 
    &= 
    \underbrace{p(O \mid \y)}_\text{Orbit}\cdot
    \underbrace{p(g \mid O, \y)}_\text{Symmetry} \cdot
    \underbrace{\eta(O)}_\text{Stabilizer},
  \end{aligned}
  \label{eq:decomp}
\end{equation}

\noindent where $p(O \mid \y)$ is invariant under transformations in $\calG$, $p(g \mid O, \y)$ describes the remaining uncertainty due to symmetry, and $\eta(O)$ is a scaling factor determined by the stabilizer
of $r_O$ in $\calG$.
Depending on the nature of $\calG$, the posterior over orbits $p(O \mid \y)$ may be considerably simpler than that over parameters $p(\x \mid \y)$.
For an additive synthesizer of 16 permutable oscillators, $|\calG| = 16! \approx 20.9\times 10^{12}$, meaning that any mode of $p(\x \mid \y)$ is accompanied by over 20 trillion symmetries, while it is represented only once in $p(O \mid \y)$.

We therefore want to factor out the effect of symmetry.
Under two reasonable assumptions, it can be shown that $p(g \mid O, \y)$ is necessarily uniform over $\calG$.
First, we assume $\calG$-invariance of the likelihood $p(\y \mid \x)$ which arises naturally from the symmetry of our synthesizer.
Secondly, we assume $\calG$-invariance of the prior $p(\x)$.
This is a stronger assumption, which we satisfy by randomly sampling our training data from $\calG$-invariant parameter distributions, in contrast to some previous work which produces training data from handmade synthesizer presets.

Under a uniform symmetry distribution, we can say that $p(\x \mid \y) \propto p(O \mid \y)\eta(O)$ and reduce our task to density estimation over orbits.
Of course, the orbit of a point in $\calP$ is an abstract equivalence class and can not directly be represented.
However, by enforcing $\calG$-invariance we ensure that our model is unable to discriminate between points on the same orbit, and thus implicitly learn the orbital posterior.

\begin{figure}
\centerline{\includegraphics[width=\columnwidth]{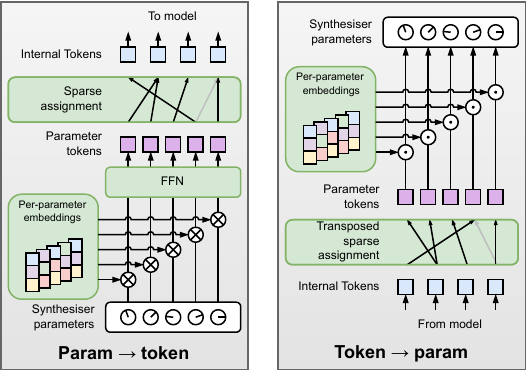}}
  \caption{The \ptot{} projection for learning to assign parameters to tokens with relaxed equivariance.}\label{fig:ptot}
\vspace{0pt plus 0pt minus 8pt}  
\end{figure}

\subsection{Permutation equivariant continuous normalizing flows}\label{sec:permflow}

Our task, then, is to learn a $\calG$-invariant distribution, focusing on the case where $\calG$ is the product of permutation (symmetric) subgroups, i.e. $\calG = \mathop{\bigtimes}_i S_{k_i}$ for orders $k_i$.
K{\"o}hler et al~\cite{kohler_equivariant_2020} showed that the pushforward of an isotropic Gaussian under an equivariant continuous normalizing flow is a density with the corresponding invariance.
We thus seek a permutation \textit{equivariant} architecture, making a Transformer~\cite{vaswani_attention_2017} encoder without positional encoding~\cite{kosiorek_conditional_2020} a natural choice.
We adopt the Diffusion Transformer (DiT)~\cite{peebles_scalable_2023} architecture with Adaptive Layer Norm (Ada-LN) conditioning.

Next, we must select an appropriate map from vectors in $\calP$ to permutable Transformer tokens.
For a simple synthesizer consisting of $k$ permutable oscillators we could simply define $k$ tokens, assigning the parameters of each oscillator to a distinct token.
However, for a synthesizer with multiple permutation symmetries, each acting on a distinct subset of parameters, and some further non-permutable parameters, this is more challenging.
As well as assigning parameters to tokens, we must indicate which tokens may be permuted with which other tokens and which may not be permuted at all.

Further, we may encounter quasi-symmetric structure in real synthesizers.\footnote{%
While formal treatment is beyond the scope of this paper, we briefly illustrate them here as they nonetheless represent a source of uncertainty in inverting real world synthesizers.
}
Specifically, we define \textit{conditional symmetry} to mean a symmetry that acts only on some subset $\calP^\prime \subset \calP$.
For example,  in Surge XT (see section~\ref{sec:surge}), only certain values of the ``routing'' parameter allow permutation symmetry between filters.
Further, if the actions of a group leave the signal \textit{almost} unchanged, up to some error bound, we call this an \textit{approximate symmetry}.
For example, swapping two filters placed before and after a soft waveshaper may yield perceptually similar, but not identical signals.
Whilst it may be possible to hand design tokenization strategies to accommodate these behaviours on a case-by-case basis, ideally we would learn such a mapping directly from data.

\subsection{Equivariance discovery with \ptot{}}\label{sec:param2tok}

To address this, we propose \ptot{}, a method to map from parameters to tokens such that the Transformer's equivariance can be used if a permutation symmetry is present in the data, but will not be enforced if not.
This is illustrated in Fig.~\ref{fig:ptot}.

Given a synthesizer with $k$ parameters, we learn a matrix $\Z\in\bbR^{k\times d}$, and represent each parameter by scalar multiplication with the corresponding row of $\Z$, followed by a feed-forward neural network $h_\theta$ applied row-wise.
To map parameter tokens to Transformer tokens, we learn an assignment matrix $\A\in\mathbb{R}^{n\times k}$, from $k$ parameters to $n$ tokens, giving the input to the first layer of our Transformer:

\begin{equation}
  \mathbf{X}_0 = \A h_\theta\left(\text{diag}(\x) \Z\right).
\end{equation}

\noindent
To return from tokens to parameters, we use weight-tying \cite{press_using_2017} of the assignment matrix and another set of learned vectors $\Z^\prime \in \mathbb{R}^{k \times d}$ as follows:

\begin{equation}
  \tilde{\x} = \left(\Z^\prime \odot \left(\A^T \X_l\right)\right)\mathbf{1}_d,
\end{equation}

\noindent where $\mathbf{1}_d$ is simply a $d$-dimensional vector of ones and $\X_l$ is the output of the $l^\text{th}$ Transformer layer.

In practice, we find it necessary to initialize $\Z, \Z^\prime,$ and $\A$ such that \ptot{} is approximately invariant to any permutation of the parameter vector.
Intuitively, the model starts with the maximum possible symmetry which is then gradually broken through training.
Finally, we encourage a sparse assignment with an $\mathcal{L}_1$ penalty on the entries of $\mathbf{A}$.
We offer regularization and initialization details in Appendix~\ref{apdx:ptot}, as well as proofs that \ptot{} can represent quasi-symmetries.

\section{Experiments}\label{sec:experiments}


We train CNF models with the rectified flow probability path~\cite{liu_flow_2023}. A minibatch approximation to the optimal transport coupling~\cite{pooladian_multisample_2023} is implemented using the Hungarian algorithm.
Conditioning dropout is applied with a probability of 10\% and inference is performed with classifier-free guidance~\cite{ho_classifier-free_2021,zheng_guided_2023} with a scale of 2.0.
Sampling is performed for 100 steps with the 4$^\text{th}$ order Runge-Kutta method.
For brevity, we omit some details about our models, datasets, and training configurations, but these can be found in Appendix~\ref{apdx:train} as well as plots of learned {\sc Param2Tok} parameters.

\subsection{Isolating permutation symmetry with $k$-osc}\label{sec:kosc}
\begin{figure*}
  \begin{center}
    \includegraphics[width=\textwidth]{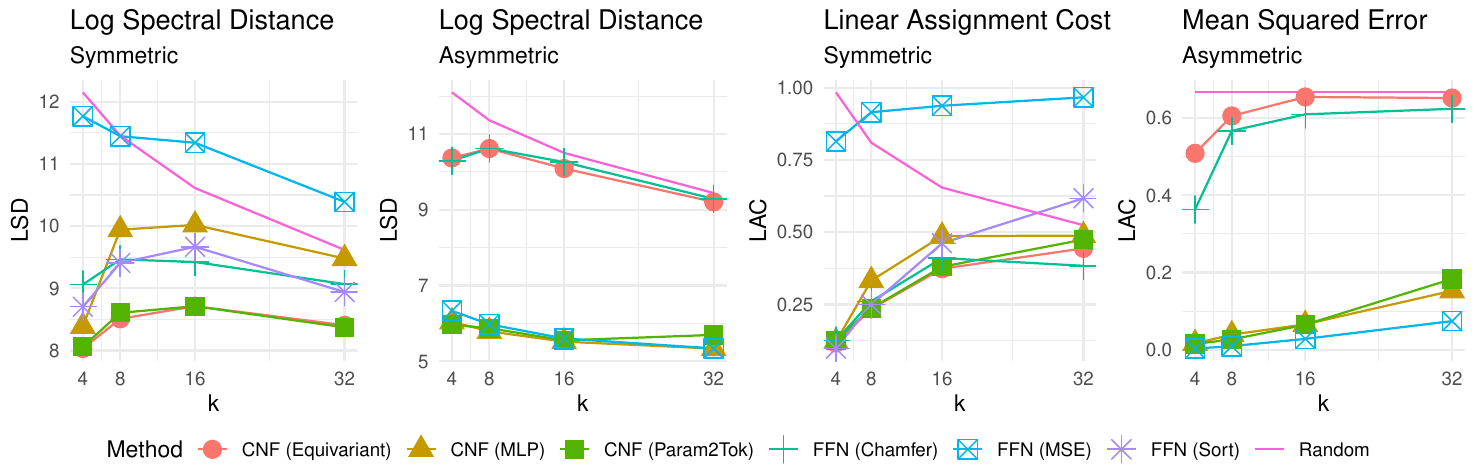}
  \caption{
    Evaluation results for the symmetric and asymmetric variants of the $k$-osc task.
  }\label{fig:kosc}
  \end{center}
\end{figure*}

To isolate the effect of permutation symmetry, we propose a simple synthetic task called $k$-osc, in which data is generated with a simple synthesizer which sums the output of $k$ identical oscillators.
Each oscillator is parameterized by: angular frequencies $\bm{\omega}$, amplitudes $\bm{\alpha}$, and waveform shapes $\bm{\gamma}$.
The waveform shape parameter linearly interpolates between sine, square, and sawtooth waves, of which the latter two are antialiased with PolyBLEP~\cite{valimaki_perceptually_2012}.

This synthesizer is permutation invariant, but we can break the symmetry by constraining each oscillator to a disjoint frequency range.
This allows us to compare the performance of models under both symmetric and asymmetric conditions.
Similarly, we can control the degree of symmetry by varying $k$, which we test at $k\in\{4,8,16,32\}$.

Data points are created by sampling a parameter vector $x = \begin{bmatrix} \bm{\omega} &  \bm{\alpha} & \bm{\gamma} \end{bmatrix}\in[-1, 1]^{3k}$.
Frequency parameters are rescaled internally to $[0, \pi]$ for the symmetric variant or $[(i - 1)\frac{\pi}{k}, i\frac{\pi}{k}]$ for $i = 1, \dots, k$ for the asymmetric variant.

\subsubsection{Models}

We compare several variants of both generative and regression-based models.
To help better interpret results, we also compute metrics over randomly sampled parameter vectors.
All models extract a representation of the signal using a 1D frequency domain CNN.

\noindent\textbf{CNF models: } We train three CNF models. The first, \cnfe{}, parameterizes the flow's vector field with a DiT and Ada-LN conditioning, with the parameter-to-token mapping described in Section~\ref{sec:permflow}. Specifically, the $i^\text{th}$ input token is $\begin{bmatrix}\omega_i & \alpha_i & \gamma_i\end{bmatrix}$, and $n=k$.
Conversely, \cnfp{} learns the mapping via the relaxed equivariance of \ptot{}, again with $k$ model tokens.
Finally, to isolate the effect of model equivariance, we train \cnfm{} using a residual MLP vector field with no intrinsic equivariance.

\noindent\textbf{Regression models: } All regression models use a residual MLP prediction head acting on the CNN representation. The first, \ffnm{} simply regresses parameters with MSE loss. The remaining models mirror approaches to permutation symmetry found in the literature~\cite{masuda_improving_2023,nercessian_neural_2020,engel_self-supervised_2020}. \ffns{} sorts oscillators by their frequency before computing the MSE loss, while \ffnc{} computes a permutation invariant loss using a differentiable Chamfer distance~\cite{barrow_parametric_1977}.
While both approaches have improved performance in prior work, neither is able to resolve the responsibility problem~\cite{hayes_responsibility_2023}.

\subsubsection{Metrics}\label{sec:kosc_metrics}

For every inferred parameter vector, we reconstruct the signal by synthesizing with these parameters.
We measure dissimilarity from the ground truth signal using the $\mathcal{L}_2$ \textit{log spectral distance} (LSD).
To compare inferred parameters with the ground truth, we simply use the mean squared error (MSE) for the asymmetric condition.
For the symmetric condition, we compute the optimal \textit{linear assignment cost} (LAC) --- that is, the minimum MSE across permutations. 
%
%
%
%

\subsubsection{Results}

Results are shown in Fig.~\ref{fig:kosc}.
Comparing \cnfe{} and \ffnm{} clearly illustrates the effect of permutation symmetry.
Under symmetric conditions, \cnfe{} excels, while \ffnm{} performs poorly across metrics.
Without symmetry, the roles are reversed --- \ffnm{} achieves excellent results, while \cnfe{} imposes an overly restrictive equivariance and fails to discriminate between oscillators. 

\ffnc{}, \ffns{}, and \cnfm{} offer ``partial'' solutions, being less affected when symmetry is present, but still underperforming \cnfe{}.
Under asymmetric conditions, \ffnc{} also imposes too stringent a restriction and performs similarly to \cnfe{}.
\cnfm{}, however, lacks explicit equivariance and performs on par with \ffnm{}.
\cnfp{} performs on par with the best models across conditions, demonstrating that \ptot{} can take advantage of model equivariance when helpful, but still learn distinct oscillator representations where necessary.
%
Broadly, both parameter- and audio-domain metrics suggest that failing to account for permutation symmetry will harm performance and that CNFs with relaxed equivariance offer good performance across both symmetric and asymmetric scenarios.

\subsection{Real-world synthesizer inversion}\label{sec:surge}
\begin{table*}[t!]
  \centering
  \renewcommand{\cellalign}{bl}
  \renewcommand{\theadalign}{bl}
  \resizebox{\textwidth}{!}{%
  \begin{tabular}{l l rrrr rrrr}
    \toprule
    & & \multicolumn{4}{c}{{\sc Surge XT} (Simple)} & \multicolumn{4}{c}{{\sc Surge XT} (Full)} \\
    Method  &  Params  &  
    \multicolumn{1}{c}{MSS $\downarrow$} & \multicolumn{1}{c}{wMFCC $\downarrow$} & \multicolumn{1}{c}{SOT $\downarrow$} & \multicolumn{1}{c}{RMS $\uparrow$} &
    \multicolumn{1}{c}{MSS $\downarrow$} & \multicolumn{1}{c}{wMFCC $\downarrow$} & \multicolumn{1}{c}{SOT $\downarrow$} & \multicolumn{1}{c}{RMS $\uparrow$}
    \\
    \cmidrule(lr){1-2}
    \cmidrule(lr){3-6}
    \cmidrule(lr){7-10}
    \cnfp{} & 40.2M &
    \textbf{3.18}	\ci{0.06}	&	\textbf{5.57}	\ci{0.06}	&	\textbf{0.042}	\ci{0.001}	&	\textbf{0.948}	\ci{0.002}	&
    \textbf{6.13}	\ci{0.09}	&	\textbf{9.63}	\ci{0.08}	&	\textbf{0.086}	\ci{0.001}	&	\textbf{0.939}	\ci{0.002}	\\
    \cnfm{}  & 41.6M & 
    3.53	\ci{0.06}	&	6.02	\ci{0.07}	&	0.045	\ci{0.001}	&	0.932	\ci{0.002}	&
    7.35	\ci{0.11}	&	11.01	\ci{0.09}	&	0.095	\ci{0.001}	&	0.924	\ci{0.002}	\\		
    \st{}  & 45.3M &
    6.51	\ci{0.10}	&	9.11	\ci{0.09}	&	0.076	\ci{0.001}	&	0.738	\ci{0.005}	&
    14.73	\ci{0.14}	&	14.42	\ci{0.11}	&	0.136	\ci{0.002}	&	0.804	\ci{0.004}	\\		
    \vae{} & 39.5M / 44.3M$^\ast$ &
    26.23	\ci{0.14}	&	15.80	\ci{0.10}	&	0.184	\ci{0.002}	&	0.595	\ci{0.006}	&
    28.86	\ci{0.19}	&	18.61	\ci{0.12}	&	0.178	\ci{0.002}	&	0.703	\ci{0.004}	\\		
    \midrule
    & &
    \multicolumn{4}{c}{Surge Full $\rightarrow$ NSynth} &
    \multicolumn{4}{c}{Surge Full $\rightarrow$ FSD50K} 
    \\
      &   &
    {MSS $\downarrow$} & {wMFCC $\downarrow$} & {SOT $\downarrow$} & {RMS $\uparrow$} &
    {MSS $\downarrow$} & {wMFCC $\downarrow$} & {SOT $\downarrow$} & {RMS $\uparrow$}
    \\
    \cmidrule(lr){1-2}
    \cmidrule(lr){3-6}
    \cmidrule(lr){7-10}
    \cnfp{} & 40.2M &
    \textbf{11.04}	\ci{0.28}	&	\textbf{19.71}	\ci{0.24}	&	\textbf{0.158}	\ci{0.004}	&	0.834	\ci{0.006}	&
    \textbf{15.40}	\ci{0.17}	&	\textbf{17.25}	\ci{0.12}	&	\textbf{0.168}	\ci{0.003}	&	0.680	\ci{0.005}	\\		
    \cnfm{}  & 41.6M &
    13.51	\ci{0.32}	&	21.09	\ci{0.23}	&	0.175	\ci{0.004}	&	\textbf{0.840}	\ci{0.006}	&
    16.96	\ci{0.18}	&	17.83	\ci{0.12}	&	0.183	\ci{0.003}	&	\textbf{0.710}	\ci{0.004}	\\		
    \st{}  & 45.3M &
    24.04	\ci{0.30}	&	28.62	\ci{0.17}	&	0.224	\ci{0.003}	&	0.755	\ci{0.007}	&
    19.09	\ci{0.16}	&	21.35	\ci{0.14}	&	0.187	\ci{0.003}	&	0.682	\ci{0.004}	\\		
    \vae{} & 44.3M &
    35.29	\ci{0.26}	&	23.50	\ci{0.14}	&	0.266	\ci{0.004}	&	0.689	\ci{0.007} &
    25.06	\ci{0.20}	&	21.24	\ci{0.12}	&	0.247	\ci{0.003}	&	0.681	\ci{0.004}	\\
    \bottomrule\rule{0pt}{0ex}\\
  \end{tabular}
  }\\\vspace{-6pt}
  \raggedright
  \footnotesize{\tiny $^\ast$ \vae{} parameter counts for \textit{Simple} and \textit{Full} datasets, respectively. The difference arises because the latent space dimension is set to the length of the parameter vector.}\\
  \centering
  \caption{\textbf{Top: } Audio reconstruction results on \textit{Simple} and \textit{Full} variants of the Surge XT synthesizer inversion task. \textbf{Bottom: } Out-of-domain audio reconstruction results for models trained on the \textit{Surge Full} dataset. \textbf{All: } Results reported with 95\% confidence interval computed across test dataset.}
  \label{tab:surge}
  \vspace{-0.5\baselineskip}
  \end{table*}

Our goal, of course, is to invert real-world software synthesizers.
To this end, we test our method on Surge XT, an open source, feature rich ``hybrid'' synthesizer which incorporates multiple methods of producing and shaping sound.
It exhibits multiple permutation symmetries, e.g. between oscillators, LFOs, and filters, as well as further conditional and approximate symmetries.

We construct two datasets of 2 million samples each by randomly sampling from the synthesizer's parameter space and rendering the corresponding audio using the {\tt pedalboard} library~\cite{sobot_pedalboard_2023}.\footnote{%
  Surge XT actually provides comprehensive Python bindings, allowing for direct programmatic interaction with the synthesizer.
  We opt to use the {\tt pedalboard} in order to ensure our system is applicable to any software synthesizer, which limits us only to parameters exposed to the plugin host.
}

The first dataset, referred to as \textit{Surge Simple}, has 92 parameters including controls for three oscillators, two filters, three envelope generators, 5 low frequency oscillators (LFOs), and some global parameters.
Omissions include discrete parameters, and those which affect global signal routing.
The second, \textit{Surge Full}, has 165 parameters, of which many are discrete, some alter the internal routing, and some control audio effects.
\textit{Surge Full} thus covers a broader sonic range, and introduces uncertainty beyond permutation symmetry.

In both cases, audio was rendered in stereo at 44.1kHz for a duration of 4 seconds and converted to a Mel-spectrogram with 128 Mel bands, using an analysis window of 25ms and a hop length of 10ms.
Spectrograms were standardized using train dataset statistics.
Like Le Vaillant et al~\cite{vaillant_improving_2021}, we adopt a one-hot representation of discrete and categorical parameters, concatenating all scalar and one-hot parameter representations to a single vector.
All synthesis parameters are scaled to the interval $[-1, 1]$.

\subsubsection{Models}

\textbf{CNF models: } All CNF models receive audio conditioning from an Audio Spectrogram Transformer (AST)~\cite{gong_ast_2021} with pre-normalization~\cite{xiong_layer_2020}.
Instead of a single {\tt [CLS]} token, we increase conditioning expressivity by learning an individual query token for each layer of the CNF's vector field.
All models are trained end-to-end.
Following prior findings on mixed continuous-discrete flow-based models~\cite{dunn_mixed_2024}, we do not constrain flows over discrete parameters to the simplex and simply adopt the same Gaussian source distribution for all dimensions.
Our \cnfp{} model again uses the \ptot{} module with a DiT vector field. Again, we train a \cnfm{} model with a residual MLP vector field to help isolate the effect of model equivariance from that of the generative approach.

\noindent\textbf{Baselines: }
We adopt the \st{}~\cite{gong_ast_2021} approach proposed by Bruford et al~\cite{bruford_synthesizer_2024} as our regression baseline, and Le Vaillant et al's~\cite{vaillant_improving_2021} \vae{} method as a further generative baseline.

\subsubsection{Metrics}\label{sec:surge_metrics}

While previous sound matching work has typically included parameter domain distances as an evaluation metric, here we argue that if the synthesizer exhibits symmetry, such metrics are uninformative and possibly misleading.\footnote{%
For more detail on this point, see Appendix~\ref{apdx:metrics}.
}
While under simple symmetries we may select an invariant metric, as in Section~\ref{sec:kosc_metrics}, this approach does not scale to more complex synthesizers.
We thus rely on audio reconstruction metrics which both share the exact invariances of the synthesizer and quantify performance on the task we are actually concerned with, i.e. best reconstructing the signal.

We measure spectrotemporal dissimilarity with a multi-scale spectral (MSS) distance computed over log-scaled Mel spectrograms at multiple resolutions.
However, a slight error in one parameter may lead an otherwise good match to be unfairly penalized due to shifts in time or frequency.
We thus include a \textit{warped Mel-frequency cepstral coefficient} (wMFCC) metric as a more ``malleable'' alternative, given by the optimal $\mathcal{L}_1$ dynamic time warping (DTW) cost between two MFCC series.
This allows robustness to timing and pitch deviations.

Pointwise spectral distances fail to capture distances ``along'' the frequency axis, such as pitch dissimilarity~\cite{turian_im_2020,hayes_responsibility_2023}.
We thus adopt the \textit{spectral optimal transport} (SOT) distance~\cite{torres_unsupervised_2024} as a pitch-sensitive measure of spectral similarity.
Finally, we include a simple cosine similarity between framewise RMS energy to capture amplitude envelope similarity.

\subsubsection{Results}

Metrics are computed on a held-out test dataset of $10,000$ sounds synthesized in the same manner as the training dataset.
Results are presented in Table~\ref{tab:surge}, and audio examples can be heard on the companion website.\footnote{Link to website:~\url{https://benhayes.net/synth-perm/}}

The CNF models perform consistently with our theoretical expectations and the results of our toy task.
In particular, \cnfp{} outperforms all models across all metrics on both datasets.
The learned \ptot{} parameters suggest that the model has discovered a token mapping that corresponds to the intrinsic symmetries of the synthesizer, and thus benefits from the Transformer's permutation equivariance.

The non-equivariant \cnfm{} also achieves reasonable performance, suggesting that simply adopting a probabilistic framework is already very helpful in dealing with the sources of ill-posedness in real synthesizers.
The difference between the MLP and \ptot{} variants is more pronounced for \textit{Surge Full}, suggesting that the effect of introducing greater complexity is magnified in the presence of unresolved symmetry.
This aligns with our decomposition of the conditional parameter density in Section~\ref{sec:method} --- any change in $p(O \mid \y)$ is ``repeated'' in $p(\x \mid \y)$ for each element of $\calG$.

Despite extensive tuning, \vae{} consistently collapsed to predicting average values for many parameters.
Of course, this conflicts with the results of the original papers~\cite{vaillant_improving_2021,esling_flow_2020}.
We suggest this may be due to the use of smaller datasets of hand-designed synthesizer presets, which could be sufficiently biased to break the invariance of $p(\x)$.

%
%
\subsubsection{Out-of-distribution results}

While this is not our focus, we report out-of-distribution results on audio from the NSynth~\cite{engel_neural_2017} and FSD50K~\cite{fonseca_fsd50k_2022} test datasets in Table~\ref{tab:surge}.
Across all models, performance suffers compared to in-domain data, but the relationships between models are preserved.
Thus, even under such challenging conditions, unhandled symmetry likely detrimentally influences performance.
To adapt our method to general sound matching, we intend to explore shared representations of synthesized and non-synthesized audio, domain adaptation techniques, and information bottlenecks, such as quantization, for conditioning.


\section{Conclusion}\label{sec:conclusion}



%

The implications of our findings are clear: if the synthesizer has a symmetry, it is better to (i) approach the problem generatively and (ii) learn the corresponding invariant density.
This extends beyond synthesizers, as audio effects also commonly exhibit permutation symmetries, as noted by Nercessian~\cite{nercessian_neural_2020}.
Beyond audio, these results are of relevance anywhere neural networks parameterize an external system with structural symmetries.

%

A key limitation of our work is the lack of specific experimentation on the role of quasi-symmetries.
A $k$-osc style task which isolates their effect would be very illuminating about the extent to which \ptot{} improves results in their presence, and we intend to include this in a subsequent publication.
We also note that we currently lack theoretical guarantees that \ptot{} will discover symmetries.
In future work, we thus intend to both study this module theoretically and gather further empirical data on the role of initialization and regularization in its behaviour.
We also plan more comprehensive evaluation of the extent to which \ptot{}-based models learn the appropriate invariance.



Our proposed method should generalize effectively to arbitrary software synthesizers.
In future work, we will therefore explore multi-task training by simultaneously modelling over many synthesizers. 
We will also seek to extend our work to the more general sound matching task by exploring the previously discussed strategies for robustifying our system to out-of-distribution inputs.


\section{Ethics Statement}

Like any AI model, our work inherently encodes the biases and values of the authors.
In particular, our choice of VST synthesizer reflects a bias towards western popular music.
However, the more abstract nature of our synthetic experimentation does suggest that our results may reasonably be expected to generalize to tools that better represent the needs of other musical cultures.

The source of training data is of particular importance in assessing the ethical impact of an AI model, and in this instance we have worked entirely with synthetically generated data.
Even in our experimentation on a real synthesizer, we opted to generate our dataset by randomly sampling parameters, rather than by scraping the internet for presets or exploiting factory preset libraries.
We further note that Surge XT is an open source synthesizer released under the GNU GPL.

Amid growing concern over AI tools displacing workers, particularly in the creative industries, we wish to highlight that our goal in this work is to develop technology to integrate with and enhance existing workflows.
We believe the choice of task reflects this by seeking to enhance interaction with an existing family of creative tools, as opposed to outright replacing them.

\section{Acknowledgements}

B.H. would like to thank Christopher Mitcheltree, Marco Pasini, Chin-Yun Yu, Jack Loth, and Julien Guinot for their invaluable feedback on this manuscript in varying stages of completion, and Jordie Shier for the many inspiring and illuminating conversations on this topic.

This research was supported by UK Research and Innovation [grant number EP/S022694/1]. It utilised Queen Mary’s Apocrita HPC facility, supported by QMUL Research-IT. \url{http://doi.org/10.5281/zenodo.438045}

\defbibheading{bibliography}[References]{%
  \section{#1}
}
\printbibliography

@article{barkan_inversynth_2019,
  title = {{{InverSynth}}: {{Deep Estimation}} of {{Synthesizer Parameter Configurations}} from {{Audio Signals}}},
  shorttitle = {{{InverSynth}}},
  author = {Barkan, Oren and Tsiris, David and Katz, Ori and Koenigstein, Noam},
  date = {2019-12},
  journaltitle = {IEEE/ACM Transactions on Audio, Speech, and Language Processing},
  volume = {27},
  number = {12},
  eprint = {1812.06349},
  eprinttype = {arXiv},
  pages = {2385--2396},
  doi = {10.1109/TASLP.2019.2944568}
}

@inproceedings{barrow_parametric_1977,
  title = {Parametric Correspondence and Chamfer Matching: {{Two}} New Techniques for Image Matching},
  booktitle = {Proceedings: {{Image}} Understanding Workshop},
  author = {Barrow, Harry G and Tenenbaum, Jay M and Bolles, Robert C and Wolf, Helen C},
  date = {1977},
  pages = {21--27},
  publisher = {Science Applications, Inc}
}

@inproceedings{bilos_scalable_2021,
  title = {Scalable {{Normalizing Flows}} for {{Permutation Invariant Densities}}},
  booktitle = {Proceedings of the 38th {{International Conference}} on {{Machine Learning}}},
  author = {Biloš, Marin and Günnemann, Stephan},
  date = {2021-07-01},
  pages = {957--967},
  publisher = {PMLR},
  issn = {2640-3498},
  url = {https://proceedings.mlr.press/v139/bilos21a.html},
  urldate = {2025-03-06},
  eventtitle = {International {{Conference}} on {{Machine Learning}}},
  langid = {english}
}

@inproceedings{bruford_synthesizer_2024,
  title = {Synthesizer {{Sound Matching Using Audio Spectrogram Transformers}}},
  booktitle = {Proceedings of the 27th {{International Conference}} on {{Digital Audio Effects}}},
  author = {Bruford, Fred and Blang, Frederik and Nercessian, Shahan},
  date = {2024-09-03/2024-09-07},
  location = {Guildford, Surrey},
  eventtitle = {{{DAFx24}}},
  langid = {english}
}

@inproceedings{chen_neural_2018,
  title = {Neural {{Ordinary Differential Equations}}},
  booktitle = {Advances in {{Neural Information Processing Systems}}},
  author = {Chen, Ricky T. Q. and Rubanova, Yulia and Bettencourt, Jesse and Duvenaud, David K},
  date = {2018},
  volume = {31},
  publisher = {Curran Associates, Inc.},
  url = {https://proceedings.neurips.cc/paper/2018/hash/69386f6bb1dfed68692a24c8686939b9-Abstract.html},
  urldate = {2025-03-06}
}

@inproceedings{dinh_density_2017,
  title = {Density Estimation Using {{Real NVP}}},
  author = {Dinh, Laurent and Sohl-Dickstein, Jascha and Bengio, Samy},
  date = {2017-02-06},
  url = {https://openreview.net/forum?id=HkpbnH9lx},
  urldate = {2025-03-18},
  eventtitle = {International {{Conference}} on {{Learning Representations}}},
  langid = {english}
}

@online{dunn_mixed_2024,
  title = {Mixed {{Continuous}} and {{Categorical Flow Matching}} for {{3D De Novo Molecule Generation}}},
  author = {Dunn, Ian and Koes, David Ryan},
  date = {2024-04-30},
  eprint = {2404.19739},
  eprinttype = {arXiv},
  eprintclass = {q-bio},
  doi = {10.48550/arXiv.2404.19739},
  pubstate = {prepublished}
}

@inproceedings{engel_ddsp_2020,
  title = {{{DDSP}}: {{Differentiable Digital Signal Processing}}},
  shorttitle = {{{DDSP}}},
  booktitle = {8th {{International Conference}} on {{Learning Representations}}},
  author = {Engel, Jesse and Hantrakul, Lamtharn (Hanoi) and Gu, Chenjie and Roberts, Adam},
  date = {2020-04},
  location = {Addis Ababa, Ethiopia},
  eventtitle = {{{ICLR}} 2020}
}

@inproceedings{engel_neural_2017,
  title = {Neural Audio Synthesis of Musical Notes with {{WaveNet}} Autoencoders},
  booktitle = {Proceedings of the 34th {{International Conference}} on {{Machine Learning}} - {{Volume}} 70},
  author = {Engel, Jesse and Resnick, Cinjon and Roberts, Adam and Dieleman, Sander and Norouzi, Mohammad and Eck, Douglas and Simonyan, Karen},
  date = {2017-08-06},
  pages = {1068--1077},
  location = {Sydney, Australia},
  eventtitle = {{{ICML}}'17}
}

@inproceedings{engel_self-supervised_2020,
  title = {Self-Supervised {{Pitch Detection}} by {{Inverse Audio Synthesis}}},
  booktitle = {Proceedings of the {{International Conference}} on {{Machine Learning}}},
  author = {Engel, Jesse and Swavely, Rigel and Roberts, Adam},
  date = {2020},
  pages = {9},
  eventtitle = {{{ICML}} 2020},
  langid = {english}
}

@article{esling_flow_2020,
  title = {Flow {{Synthesizer}}: {{Universal Audio Synthesizer Control}} with {{Normalizing Flows}}},
  shorttitle = {Flow {{Synthesizer}}},
  author = {Esling, Philippe and Masuda, Naotake and Bardet, Adrien and Despres, Romeo and Chemla-Romeu-Santos, Axel},
  date = {2020-12},
  journaltitle = {Applied Sciences},
  volume = {10},
  number = {1},
  pages = {302},
  doi = {10.3390/app10010302},
  langid = {english}
}

@article{fonseca_fsd50k_2022,
  title = {{{FSD50K}}: {{An Open Dataset}} of {{Human-Labeled Sound Events}}},
  shorttitle = {{{FSD50K}}},
  author = {Fonseca, Eduardo and Favory, Xavier and Pons, Jordi and Font, Frederic and Serra, Xavier},
  date = {2022},
  journaltitle = {IEEE/ACM Transactions on Audio, Speech, and Language Processing},
  volume = {30},
  pages = {829--852},
  issn = {2329-9304},
  doi = {10.1109/TASLP.2021.3133208},
  eventtitle = {{{IEEE}}/{{ACM Transactions}} on {{Audio}}, {{Speech}}, and {{Language Processing}}}
}

@inproceedings{gong_ast_2021,
  title = {{{AST}}: {{Audio Spectrogram Transformer}}},
  shorttitle = {{{AST}}},
  author = {Gong, Yuan and Chung, Yu-An and Glass, James},
  date = {2021},
  pages = {571--575},
  doi = {10.21437/Interspeech.2021-698},
  eventtitle = {Proc. {{Interspeech}} 2021}
}

@inproceedings{grathwohl_ffjord_2018,
  title = {{{FFJORD}}: {{Free-Form Continuous Dynamics}} for {{Scalable Reversible Generative Models}}},
  shorttitle = {{{FFJORD}}},
  author = {Grathwohl, Will and Chen, Ricky T. Q. and Bettencourt, Jesse and Sutskever, Ilya and Duvenaud, David},
  date = {2018-09-27},
  url = {https://openreview.net/forum?id=rJxgknCcK7},
  urldate = {2025-03-06},
  eventtitle = {International {{Conference}} on {{Learning Representations}}},
  langid = {english}
}

@article{han_learning_2024,
  title = {Learning to {{Solve Inverse Problems}} for {{Perceptual Sound Matching}}},
  author = {Han, Han and Lostanlen, Vincent and Lagrange, Mathieu},
  date = {2024},
  journaltitle = {IEEE/ACM Transactions on Audio, Speech, and Language Processing},
  volume = {32},
  pages = {2605--2615},
  issn = {2329-9304},
  doi = {10.1109/TASLP.2024.3393738},
  eventtitle = {{{IEEE}}/{{ACM Transactions}} on {{Audio}}, {{Speech}}, and {{Language Processing}}}
}

@inproceedings{hantrakul_fast_2019,
  title = {Fast and {{Flexible Neural Audio Synthesis}}},
  booktitle = {Proceedings of the 20th {{International Society}} for {{Music Information Retrieval Conference}}},
  author = {Hantrakul, Lamtharn and Engel, Jesse and Roberts, Adam and Gu, Chenjie},
  date = {2019},
  pages = {524--530},
  location = {Delft, The Netherlands},
  eventtitle = {{{ISMIR}}},
  langid = {english}
}

@inproceedings{hassan_equivariant_2024,
  title = {Equivariant {{Flow Matching}} for {{Molecular Conformer Generation}}},
  author = {Hassan, Majdi and Shenoy, Nikhil and Lee, Jungyoon and Stark, Hannes and Thaler, Stephan and Beaini, Dominique},
  date = {2024-07-17},
  url = {https://openreview.net/forum?id=zu48fPRYjr&noteId=IPTwmCwnVi},
  urldate = {2025-03-06},
  eventtitle = {{{ICML}}'24 {{Workshop ML}} for {{Life}} and {{Material Science}}: {{From Theory}} to {{Industry Applications}}},
  langid = {english}
}

@inproceedings{hayes_responsibility_2023,
  title = {The {{Responsibility Problem}} in {{Neural Networks}} with {{Unordered Targets}}},
  booktitle = {The {{First Tiny Papers Track}} at {{ICLR}} 2023},
  author = {Hayes, Ben and Saitis, Charalampos and Fazekas, György},
  date = {2023-05-05},
  location = {Kigali, Rwanda},
  url = {https://openreview.net/pdf?id=jd7Hy1jRiv4},
  eventtitle = {{{ICLR}}}
}

@article{hayes_review_2023,
  title = {A {{Review}} of {{Differentiable Digital Signal Processing}} for {{Music}} \& {{Speech Synthesis}}},
  author = {Hayes, Ben and Shier, Jordie and Fazekas, György and McPherson, Andrew and Saitis, Charalampos},
  date = {2023},
  journaltitle = {Frontiers in Signal Processing}
}

@inproceedings{ho_classifier-free_2021,
  title = {Classifier-{{Free Diffusion Guidance}}},
  author = {Ho, Jonathan and Salimans, Tim},
  date = {2021-12-08},
  url = {https://openreview.net/forum?id=qw8AKxfYbI},
  urldate = {2025-03-19},
  eventtitle = {{{NeurIPS}} 2021 {{Workshop}} on {{Deep Generative Models}} and {{Downstream Applications}}},
  langid = {english}
}

@article{justice_analytic_1979,
  title = {Analytic Signal Processing in Music Computation},
  author = {Justice, J.},
  date = {1979-12},
  journaltitle = {IEEE Transactions on Acoustics, Speech, and Signal Processing},
  volume = {27},
  number = {6},
  pages = {670--684},
  issn = {0096-3518},
  doi = {10.1109/TASSP.1979.1163321},
  eventtitle = {{{IEEE Transactions}} on {{Acoustics}}, {{Speech}}, and {{Signal Processing}}}
}

@inproceedings{kim_crepe_2018,
  title = {Crepe: A Convolutional Representation for Pitch Estimation},
  shorttitle = {Crepe},
  booktitle = {2018 IEEE International Conference on Acoustics, Speech, and Signal Processing, ICASSP 2018 - Proceedings},
  author = {Kim, Jong Wook and Salamon, Justin and Li, Peter and Bello, Juan Pablo},
  date = {2018-09-10},
  pages = {161--165},
  publisher = {{Institute of Electrical and Electronics Engineers Inc.}},
  doi = {10.1109/ICASSP.2018.8461329},
  eventtitle = {2018 IEEE International Conference on Acoustics, Speech, and Signal Processing, ICASSP 2018},
  langid = {English (US)}
}

@online{kim_setvae_2021,
  title = {{{SetVAE}}: {{Learning Hierarchical Composition}} for {{Generative Modeling}} of {{Set-Structured Data}}},
  shorttitle = {{{SetVAE}}},
  author = {Kim, Jinwoo and Yoo, Jaehoon and Lee, Juho and Hong, Seunghoon},
  date = {2021-03-29},
  eprint = {2103.15619},
  eprinttype = {arXiv},
  eprintclass = {cs},
  doi = {10.48550/arXiv.2103.15619},
  pubstate = {prepublished}
}

@article{klein_equivariant_2023,
  title = {Equivariant Flow Matching},
  author = {Klein, Leon and Krämer, Andreas and Noe, Frank},
  date = {2023-12-15},
  journaltitle = {Advances in Neural Information Processing Systems},
  volume = {36},
  pages = {59886--59910},
  url = {https://proceedings.neurips.cc/paper_files/paper/2023/hash/bc827452450356f9f558f4e4568d553b-Abstract-Conference.html},
  urldate = {2025-02-06},
  langid = {english}
}

@inproceedings{kohler_equivariant_2020,
  title = {Equivariant {{Flows}}: {{Exact Likelihood Generative Learning}} for {{Symmetric Densities}}},
  shorttitle = {Equivariant {{Flows}}},
  booktitle = {Proceedings of the 37th {{International Conference}} on {{Machine Learning}}},
  author = {Köhler, Jonas and Klein, Leon and Noe, Frank},
  date = {2020-11-21},
  pages = {5361--5370},
  publisher = {PMLR},
  issn = {2640-3498},
  url = {https://proceedings.mlr.press/v119/kohler20a.html},
  urldate = {2024-12-28},
  eventtitle = {International {{Conference}} on {{Machine Learning}}},
  langid = {english}
}

@inproceedings{kosiorek_conditional_2020,
  title = {Conditional {{Set Generation}} with {{Transformers}}},
  booktitle = {Workshop on {{Object-Oriented Learning}}},
  author = {Kosiorek, Adam R. and Kim, Hyunjik and Rezende, Danilo J.},
  date = {2020-07-01},
  eprint = {2006.16841},
  eprinttype = {arXiv},
  eprintclass = {cs},
  publisher = {arXiv},
  url = {http://arxiv.org/abs/2006.16841},
  urldate = {2022-11-19},
  eventtitle = {International {{Conference}} on {{Machine Learning}}}
}

@inproceedings{lipman_flow_2023,
  title = {Flow {{Matching}} for {{Generative Modeling}}},
  author = {Lipman, Yaron and Chen, Ricky T. Q. and Ben-Hamu, Heli and Nickel, Maximilian and Le, Matthew},
  date = {2023-02-01},
  url = {https://openreview.net/forum?id=PqvMRDCJT9t},
  urldate = {2025-03-06},
  eventtitle = {The {{Eleventh International Conference}} on {{Learning Representations}}},
  langid = {english}
}

@inproceedings{liu_flow_2023,
  title = {Flow {{Straight}} and {{Fast}}: {{Learning}} to {{Generate}} and {{Transfer Data}} with {{Rectified Flow}}},
  shorttitle = {Flow {{Straight}} and {{Fast}}},
  author = {Liu, Xingchao and Gong, Chengyue and Liu, Qiang},
  date = {2023-02-01},
  url = {https://openreview.net/forum?id=XVjTT1nw5z},
  urldate = {2025-03-06},
  eventtitle = {The {{Eleventh International Conference}} on {{Learning Representations}}},
  langid = {english}
}

@article{masuda_improving_2023,
  title = {Improving {{Semi-Supervised Differentiable Synthesizer Sound Matching}} for {{Practical Applications}}},
  author = {Masuda, Naotake and Saito, Daisuke},
  date = {2023},
  journaltitle = {IEEE/ACM Transactions on Audio, Speech, and Language Processing},
  volume = {31},
  pages = {863--875},
  issn = {2329-9304},
  doi = {10.1109/TASLP.2023.3237161}
}

@unpublished{masuda_synthesizer_2021,
  title = {Synthesizer {{Sound Matching}} with {{Differentiable DSP}}},
  author = {Masuda, Naotake and Saito, Daisuke},
  date = {2021-11-07},
  doi = {10.5281/zenodo.5624609},
  venue = {Online}
}

@article{mcaulay_speech_1986,
  title = {Speech Analysis/{{Synthesis}} Based on a Sinusoidal Representation},
  author = {McAulay, R. and Quatieri, T.},
  date = {1986-08},
  journaltitle = {IEEE Transactions on Acoustics, Speech, and Signal Processing},
  volume = {34},
  number = {4},
  pages = {744--754},
  issn = {0096-3518},
  doi = {10.1109/TASSP.1986.1164910},
  eventtitle = {{{IEEE Transactions}} on {{Acoustics}}, {{Speech}}, and {{Signal Processing}}}
}

@inproceedings{nercessian_neural_2020,
  title = {Neural {{Parametric Equalizer Matching Using Differentiable Biquads}}},
  booktitle = {Proceedings of the 23rd {{International Conference}} on {{Digital Audio Effects}}},
  author = {Nercessian, Shahan},
  date = {2020},
  pages = {8},
  location = {Vienna, Austria},
  eventtitle = {{{DAFx2020}}},
  langid = {english}
}

@inproceedings{peebles_scalable_2023,
  title = {Scalable {{Diffusion Models}} with {{Transformers}}},
  booktitle = {2023 {{IEEE}}/{{CVF International Conference}} on {{Computer Vision}} ({{ICCV}})},
  author = {Peebles, William and Xie, Saining},
  date = {2023-10-01},
  pages = {4172--4182},
  publisher = {IEEE},
  location = {Paris, France},
  doi = {10.1109/ICCV51070.2023.00387},
  eventtitle = {2023 {{IEEE}}/{{CVF International Conference}} on {{Computer Vision}} ({{ICCV}})},
  isbn = {9798350307184},
  langid = {english}
}

@inproceedings{pooladian_multisample_2023,
  title = {Multisample {{Flow Matching}}: {{Straightening Flows}} with {{Minibatch Couplings}}},
  shorttitle = {Multisample {{Flow Matching}}},
  booktitle = {Proceedings of the 40th {{International Conference}} on {{Machine Learning}}},
  author = {Pooladian, Aram-Alexandre and Ben-Hamu, Heli and Domingo-Enrich, Carles and Amos, Brandon and Lipman, Yaron and Chen, Ricky T. Q.},
  date = {2023-07-03},
  pages = {28100--28127},
  publisher = {PMLR},
  issn = {2640-3498},
  url = {https://proceedings.mlr.press/v202/pooladian23a.html},
  urldate = {2025-02-04},
  eventtitle = {International {{Conference}} on {{Machine Learning}}},
  langid = {english}
}

@inproceedings{press_using_2017,
  title = {Using the {{Output Embedding}} to {{Improve Language Models}}},
  booktitle = {Proceedings of the 15th {{Conference}} of the {{European Chapter}} of the           {{Association}} for {{Computational Linguistics}}: {{Volume}} 2, {{Short Papers}}},
  author = {Press, Ofir and Wolf, Lior},
  date = {2017},
  pages = {157--163},
  publisher = {Association for Computational Linguistics},
  location = {Valencia, Spain},
  doi = {10.18653/v1/E17-2025},
  eventtitle = {Proceedings of the 15th {{Conference}} of the {{European Chapter}} of the           {{Association}} for {{Computational Linguistics}}: {{Volume}} 2, {{Short Papers}}},
  langid = {english}
}

@inproceedings{roth_comparison_2011,
  title = {A {{Comparison}} of {{Parametric Optimization Techniques}} for {{Musical Instrument Tone Matching}}},
  author = {Roth, Martin and Yee-King, Matthew},
  date = {2011-05-13},
  publisher = {Audio Engineering Society},
  url = {https://www-aes-org.ezproxy.library.qmul.ac.uk/e-lib/browse.cfm?elib=15885},
  urldate = {2023-06-22},
  eventtitle = {Audio {{Engineering Society Convention}} 130}
}

@article{serra_spectral_1990,
  title = {Spectral {{Modeling Synthesis}}: {{A Sound Analysis}}/{{Synthesis System Based}} on a {{Deterministic Plus Stochastic Decomposition}}},
  shorttitle = {Spectral {{Modeling Synthesis}}},
  author = {Serra, Xavier and Smith, Julius},
  date = {1990},
  journaltitle = {Computer Music Journal},
  volume = {14},
  number = {4},
  pages = {12--24},
  issn = {0148-9267},
  doi = {10.2307/3680788}
}

@thesis{shier_synthesizer_2021,
  type = {mathesis},
  title = {The Synthesizer Programming Problem: Improving the Usability of Sound Synthesizers},
  shorttitle = {The Synthesizer Programming Problem},
  author = {Shier, Jordie},
  date = {2021},
  institution = {University of Victoria},
  url = {http://hdl.handle.net/1828/13593},
  urldate = {2025-02-07},
  langid = {english}
}

@software{sobot_pedalboard_2023,
  title = {Pedalboard},
  author = {Sobot, Peter},
  date = {2023-04-10},
  doi = {10.5281/ZENODO.7817838},
  langid = {english},
  organization = {Zenodo},
  version = {0.7.3}
}

@online{song_equivariant_2023,
  title = {Equivariant {{Flow Matching}} with {{Hybrid Probability Transport}}},
  author = {Song, Yuxuan and Gong, Jingjing and Xu, Minkai and Cao, Ziyao and Lan, Yanyan and Ermon, Stefano and Zhou, Hao and Ma, Wei-Ying},
  date = {2023-12-12},
  eprint = {2312.07168},
  eprinttype = {arXiv},
  eprintclass = {cs},
  doi = {10.48550/arXiv.2312.07168},
  langid = {english},
  pubstate = {prepublished}
}

@article{tong_improving_2024,
  title = {Improving and Generalizing Flow-Based Generative Models with Minibatch Optimal Transport},
  author = {Tong, Alexander and Fatras, Kilian and Malkin, Nikolay and Huguet, Guillaume and Zhang, Yanlei and Rector-Brooks, Jarrid and Wolf, Guy and Bengio, Yoshua},
  date = {2024},
  journaltitle = {Transactions on Machine Learning Research},
  issn = {2835-8856},
  url = {https://openreview.net/forum?id=CD9Snc73AW},
  urldate = {2025-03-04},
  langid = {english}
}

@inproceedings{torres_unsupervised_2024,
  title = {Unsupervised {{Harmonic Parameter Estimation Using Differentiable DSP}} and {{Spectral Optimal Transport}}},
  booktitle = {{{ICASSP}} 2024 - 2024 {{IEEE International Conference}} on {{Acoustics}}, {{Speech}} and {{Signal Processing}} ({{ICASSP}})},
  author = {Torres, Bernardo and Peeters, Geoffroy and Richard, Gaël},
  date = {2024-04-14},
  pages = {1176--1180},
  publisher = {IEEE},
  location = {Seoul, Korea, Republic of},
  doi = {10.1109/ICASSP48485.2024.10447011},
  eventtitle = {{{ICASSP}} 2024 - 2024 {{IEEE International Conference}} on {{Acoustics}}, {{Speech}} and {{Signal Processing}} ({{ICASSP}})},
  isbn = {9798350344851}
}

@unpublished{turian_im_2020,
  title = {I'm {{Sorry}} for {{Your Loss}}: {{Spectrally-Based Audio Distances Are Bad}} at {{Pitch}}},
  shorttitle = {I'm {{Sorry}} for {{Your Loss}}},
  author = {Turian, Joseph and Henry, Max},
  date = {2020-12-09},
  eprint = {2012.04572},
  eprinttype = {arXiv},
  eprintclass = {cs, eess},
  url = {http://arxiv.org/abs/2012.04572},
  urldate = {2021-05-05}
}

@online{uzrad_diffmoog_2024,
  title = {{{DiffMoog}}: A {{Differentiable Modular Synthesizer}} for {{Sound Matching}}},
  shorttitle = {{{DiffMoog}}},
  author = {Uzrad, Noy and Barkan, Oren and Elharar, Almog and Shvartzman, Shlomi and Laufer, Moshe and Wolf, Lior and Koenigstein, Noam},
  date = {2024-01-23},
  eprint = {2401.12570},
  eprinttype = {arXiv},
  eprintclass = {eess},
  doi = {10.48550/arXiv.2401.12570},
  pubstate = {prepublished}
}

@inproceedings{vaillant_improving_2021,
  title = {Improving {{Synthesizer Programming From Variational Autoencoders Latent Space}}},
  booktitle = {2021 24th {{International Conference}} on {{Digital Audio Effects}} ({{DAFx}})},
  author = {Vaillant, Gwendal Le and Dutoit, Thierry and Dekeyser, Sebastien},
  date = {2021-09-08},
  pages = {276--283},
  publisher = {IEEE},
  location = {Vienna, Austria},
  doi = {10.23919/DAFx51585.2021.9768218},
  eventtitle = {2021 24th {{International Conference}} on {{Digital Audio Effects}} ({{DAFx}})},
  isbn = {978-3-200-08378-3},
  langid = {english}
}

@article{valimaki_perceptually_2012,
  title = {Perceptually Informed Synthesis of Bandlimited Classical Waveforms Using Integrated Polynomial Interpolation},
  author = {Välimäki, Vesa and Pekonen, Jussi and Nam, Juhan},
  date = {2012-01-01},
  journaltitle = {The Journal of the Acoustical Society of America},
  volume = {131},
  number = {1},
  pages = {974--986},
  issn = {0001-4966, 1520-8524},
  doi = {10.1121/1.3651227},
  langid = {english}
}

@inproceedings{vaswani_attention_2017,
  title = {Attention Is {{All}} You {{Need}}},
  booktitle = {Advances in {{Neural Information Processing Systems}}},
  author = {Vaswani, Ashish and Shazeer, Noam and Parmar, Niki and Uszkoreit, Jakob and Jones, Llion and Gomez, Aidan N and family=Kaiser, given=Ł, prefix=ukasz, useprefix=false and Polosukhin, Illia},
  date = {2017},
  volume = {30},
  publisher = {Curran Associates, Inc.},
  url = {https://proceedings.neurips.cc/paper_files/paper/2017/hash/3f5ee243547dee91fbd053c1c4a845aa-Abstract.html},
  urldate = {2025-03-10}
}

@inproceedings{wu_fast_2023,
  title = {Fast {{Point Cloud Generation}} with {{Straight Flows}}},
  booktitle = {2023 {{IEEE}}/{{CVF Conference}} on {{Computer Vision}} and {{Pattern Recognition}} ({{CVPR}})},
  author = {Wu, Lemeng and Wang, Dilin and Gong, Chengyue and Liu, Xingchao and Xiong, Yunyang and Ranjan, Rakesh and Krishnamoorthi, Raghuraman and Chandra, Vikas and Liu, Qiang},
  date = {2023-06},
  pages = {9445--9454},
  publisher = {IEEE},
  location = {Vancouver, BC, Canada},
  doi = {10.1109/CVPR52729.2023.00911},
  eventtitle = {2023 {{IEEE}}/{{CVF Conference}} on {{Computer Vision}} and {{Pattern Recognition}} ({{CVPR}})},
  isbn = {9798350301298},
  langid = {english}
}

@inproceedings{xiong_layer_2020,
  title = {On {{Layer Normalization}} in the {{Transformer Architecture}}},
  booktitle = {Proceedings of the 37th {{International Conference}} on {{Machine Learning}}},
  author = {Xiong, Ruibin and Yang, Yunchang and He, Di and Zheng, Kai and Zheng, Shuxin and Xing, Chen and Zhang, Huishuai and Lan, Yanyan and Wang, Liwei and Liu, Tieyan},
  date = {2020-11-21},
  pages = {10524--10533},
  publisher = {PMLR},
  issn = {2640-3498},
  url = {https://proceedings.mlr.press/v119/xiong20b.html},
  urldate = {2025-03-12},
  eventtitle = {International {{Conference}} on {{Machine Learning}}},
  langid = {english}
}

@inproceedings{yang_white_2023,
  title = {White Box Search over Audio Synthesizer Parameters},
  booktitle = {Proc. of the 24rd {{Int}}. {{Society}} for {{Music Information Retrieval Conf}}.},
  author = {Yang, Yuting and Jin, Zeyu and Barnes, Connelly and Finkelstein, Adam},
  date = {2023},
  location = {Milan, Italy},
  eventtitle = {{{ISMIR}}},
  langid = {english}
}

@inproceedings{zhang_deep_2019,
  title = {Deep {{Set Prediction Networks}}},
  booktitle = {Advances in {{Neural Information Processing Systems}}},
  author = {Zhang, Yan and Hare, Jonathon and Prugel-Bennett, Adam},
  date = {2019},
  volume = {32},
  publisher = {Curran Associates, Inc.},
  url = {https://papers.nips.cc/paper_files/paper/2019/hash/6e79ed05baec2754e25b4eac73a332d2-Abstract.html},
  urldate = {2025-03-06}
}

@inproceedings{zhang_fspool_2019,
  title = {{{FSPool}}: {{Learning Set Representations}} with {{Featurewise Sort Pooling}}},
  shorttitle = {{{FSPool}}},
  author = {Zhang, Yan and Hare, Jonathon and Prügel-Bennett, Adam},
  date = {2019-09-23},
  url = {https://openreview.net/forum?id=HJgBA2VYwH},
  urldate = {2023-09-14},
  eventtitle = {International {{Conference}} on {{Learning Representations}}},
  langid = {english}
}

@inproceedings{zhang_set_2021,
  title = {Set {{Prediction}} without {{Imposing Structure}} as {{Conditional Density Estimation}}},
  author = {Zhang, David W. and Burghouts, Gertjan J. and Snoek, Cees G. M.},
  date = {2021-01-12},
  url = {https://openreview.net/forum?id=04ArenGOz3},
  urldate = {2025-03-06},
  eventtitle = {International {{Conference}} on {{Learning Representations}}},
  langid = {english}
}

@online{zheng_guided_2023,
  title = {Guided {{Flows}} for {{Generative Modeling}} and {{Decision Making}}},
  author = {Zheng, Qinqing and Le, Matt and Shaul, Neta and Lipman, Yaron and Grover, Aditya and Chen, Ricky T. Q.},
  date = {2023-12-07},
  eprint = {2311.13443},
  eprinttype = {arXiv},
  eprintclass = {cs},
  doi = {10.48550/arXiv.2311.13443},
  pubstate = {prepublished}
}

\appendix
\section{Auxiliary derivations \& proofs}\label{apdx:deriv}

\subsection{Orbital factorization of $p(\x \mid \y)$}

Recall that $\calP\subset\bbR^k$ is the space of synthesizer parameters, acted on by $\calG$.
The $\calG$-orbit of a point $\x\in\calP$ is the set of all points in $\calP$ that are reachable from $\x$ by actions $g\in\calG$. That is:

\begin{equation}
  O_\x = \left\{
    g\cdot \x  : g \in \calG
    \right\}.
\end{equation}

It is well known that the orbits of a group action on a set form a disjoint partition of the set, which we denote $\calP \mathbin{/} \calG$.
We select a representative in $\calP$ for each orbit, $O \longleftrightarrow r_O$, which by the partitioning of $\calP$ is unique.

Then for any $\x\in\calP$ there exists a $g\in\calG$ such that $\x = g \cdot r_O$ for some unique $O$.
The uniqueness of $g$ is a slightly more nuanced matter.

\subsubsection{Uniqueness of $g$}

We start with a brief definition. The stabilizer subgroup of $\calG$ with respect to $\x$ is given by:

\begin{equation}
  \text{Stab}(\x) = \left\{
    g \in \calG : g\cdot \x = \x
  \right\}.
\end{equation}

We say $\calG$ is \textit{free} if and only if $\text{Stab}(\x) = \{ e_\calG \}$ for all $\x\in\calP$ where $e_\calG$ is the identity element of $\calG$.

Formally, in our setting, we can not safely assume that $\calG$ is free as there may exist parameter configurations with non-trivial stabilizers.
For example, if two oscillators in a multi-oscillator synthesizer have matching parameters and $\calG$ permutes oscillators, then the permutation which swaps them is clearly a member of the stabilizer subgroup.
Practically, this is of little concern as the probability of sampling a parameter vector with a non-trivial stabilizer is very low.
Nonetheless, we offer the complete derivation addressing non-trivial stabilizers here.

\subsection{Factorization of $p(\x \mid \y)$}

We wish to show that $p(\x \mid \y)$ can be decomposed $p(O \mid \y)p(g \mid O, \y)\eta(O)$.
We do so assuming $\calG$ is a finite group.
First, by the law of total probability we have:

\begin{align}
  p(\x \mid \y) &= \sum_{O\in \calP / \calG} p(\x, O \mid \y) \\
  &= p(\x, O_\x \mid \y).
\end{align}

\noindent Applying the law again, and using the representation of $\x$ as $g \cdot r_O$, we get:

\begin{align}
  p(\x, O_\x \mid \y) &= \sum_{g\in\calG} p(\x, O_\x, g \mid \y) \\
  &= \sum_{g\in\calG} p(O_\x, g \mid \y) \cdot \delta_{\x, g \cdot r_{O_\x}},
\end{align}

\noindent where $\delta_{a,b} = 1$ if $a=b$ and $0$ otherwise, i.e. the Kronecker delta.
By the orbit-stabilizer theorem, there are $|\text{Stab}(r_{O_\x})|$ unique elements in $\calG$ that satisfy $\x = g\cdot r_{O_\x}$.
Thus, we can state that:

\begin{align}
  p(\x \mid \y) &= |\text{Stab}(r_{O_\x})| \cdot p(O_\x, g \mid \y) \\
  &= p(O_\x \mid \y) \cdot p(g \mid O_\x, \y) \cdot \eta(O_\x),
\end{align}

\noindent where $\eta(O_\x) = |\text{Stab}(r_{O_\x})|$.
For trivial stabilizers, $\eta(O_\x)=1$.
For convenience, in the manuscript we adopt the convention that $O=O_\x$.

\subsection{Uniformity of $p(g \mid O, \y)$}

In the manuscript, we assert that if $p(\y \mid \x)$ and $p(\x)$ are $\calG$-invariant, then $p(g \mid O, \y)$ is uniformly distributed over $\calG$.
Our assumptions are:

\begin{equation}
  \begin{array}{rll}
    p(\y \mid \x) &= p(\y\mid g \cdot\x) & \quad \forall g\in\calG \\
    p(\x) &= p(g \cdot \x) & \quad \forall g\in\calG.
  \end{array}
\end{equation}

\noindent Then, it is straightforward to show that the posterior is also $\calG$-invariant:

\begin{align}
 p(g\cdot\x\mid\y)
  &\propto
  p(\y \mid g\cdot\x)p(g\cdot\x) \\
  &=
  p(\y\mid\x)p(\x) \\
  &\propto
  p(\x\mid\y).
\end{align}

\noindent Recalling our factorization of $p(\x\mid\y)$ and using the invariance of the posterior, we observe that for all $h\in\calG$:

\begin{equation}
  p(O \mid \y)p(g\mid O, \y) =
  p(O \mid \y)p(h\cdot g\mid O, \y),
\end{equation}

\noindent noting that $\eta(O) \geq 1$.
Therefore,

\begin{equation}
  p(g\mid O, \y) =
  p(h\cdot g\mid O, \y), \quad \forall h\in\calG.
\end{equation}

\noindent Of course, this is only true if $p(g \mid O,\y)$ is uniform over $\calG$, and so we can state, as in the manuscript, that:

\begin{equation}
p(\x\mid\y)\propto p(O\mid\y)\eta(O)
\end{equation}

\subsection{Conditional symmetry in \ptot{}}

\begin{definition}[\ptot{}]
  Let $\text{P2T}: \calP \to \bbR^{n\times d}$ be the mapping from parameters to tokens:
  \[
    \text{P2T}(x) =
      \A h_\theta\left(\text{diag}(\x)\cdot \Z\right),
      \nonumber
    \]
  for $\A\in\bbR^{n \times k}$, $\Z\in\bbR^{k\times d}$, and FFN $h_\theta$ applied row-wise.
\end{definition}

We aim to show that, in combination with the first Transformer layer, the \ptot{} block is capable of representing a conditional symmetry by implementing conditional equivariance.
That is, the equivariance can be broken depending on the value of some entry in the parameter vector $\x\in\calP$.
We do so by construction, simply to illustrate that this is possible.

For simplicity, we assume that the residual multi-head attention block implements the identity transformation, which is possible if the output projection is zero.
We also ignore the effect of layer normalization, without loss of generality.
Thus, we focus on the effect of the FFN in the first transformer layer.

Let $\x = \begin{bmatrix}a & b & c\end{bmatrix}^T \in [0, 1]^3$ be our parameter vector. Let the permutation $\pi(\x) =\begin{bmatrix}b & a & c\end{bmatrix}^T$.
Let the permutation $\rho$ act on a $2 \times d$ matrix by swapping its rows, i.e. $\rho\left(\begin{bmatrix} \mathbf{a} & \mathbf{b}\end{bmatrix}^T\right) = \begin{bmatrix} \mathbf{b} & \mathbf{a}\end{bmatrix}^T$.

Let the FFN be defined $\text{FFN}(\y) = \text{ReLU}(\y\cdot\mathbf{W}_\text{in}) \cdot \mathbf{W}_\text{out}$.

\begin{theorem}
There exist $\text{P2T}$ and $\text{FFN}$ parameters such that 

$$
\begin{array}{ll}
  c = 1: & \text{FFN}\circ\text{P2T}\circ\pi(\x) = \rho\circ\text{FFN}\circ\text{P2T}(\x) \\
  c = 0: & \text{FFN}\circ\text{P2T}\circ\pi(\x) \neq \rho\circ\text{FFN}\circ\text{P2T}(\x).
\end{array}
$$
\end{theorem}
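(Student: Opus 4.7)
The approach is constructive: I exhibit explicit values of $\Z$, $\A$, $h_\theta$, $W_{\text{in}}$, and $W_{\text{out}}$ that realise the claimed conditional equivariance. First, I would fix the \ptot{} side so that the token matrix entering the FFN has the form
\[
\text{P2T}(\x) \;=\; \begin{bmatrix} a & 0 & c \\ 0 & b & c \end{bmatrix},
\]
for instance by taking $d=3$, $\Z = I_3$, $h_\theta = \text{id}$, and $\A = \left[\begin{smallmatrix} 1 & 0 & 1 \\ 0 & 1 & 1 \end{smallmatrix}\right]$. A one-line check shows that $\text{P2T}$ on its own is \emph{not} $\rho$-equivariant under $\pi$: for $a \neq b$, the rows of $\text{P2T}(\pi(\x))$ disagree with the rows of $\rho\,\text{P2T}(\x)$, so any conditional equivariance of the composition must be effected by the FFN.

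Second, I would design the FFN so that, as a row-wise map $\bbR^3 \to \bbR^2$, it is symmetric in its first two arguments precisely when the third argument equals $1$. Because the stated FFN carries no bias term, I use $y_3$ itself as a dynamic gate: hidden units of the form $\text{ReLU}(y_1 - y_3)$ and $\text{ReLU}(y_2 - y_3)$ evaluate to $y_1$ and $y_2$ respectively when $y_3 = 0$, and to $0$ when $y_3 = 1$, using that $y_1, y_2 \in [0,1]$. Combined with a symmetric hidden unit $\text{ReLU}(y_1 + y_2)$ and an appropriate $W_{\text{out}} \in \bbR^{3 \times 2}$, the output collapses to $(y_1+y_2,\, y_1+y_2)$ when $c = 1$ (symmetric in $y_1, y_2$) and to $(2y_1+y_2,\, y_1+2y_2)$ when $c = 0$ (asymmetric).

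The verification is then a direct substitution. For $c=1$, applying the FFN row-wise to both $\text{P2T}(\pi(\x))$ and $\rho\,\text{P2T}(\x)$ yields the same $2 \times 2$ matrix, establishing the equality. For $c=0$, any $a \neq b$ (e.g.\ $a=1$, $b=0$) produces distinct matrices, establishing the inequality.

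The main obstacle is the absence of any bias term in the stated FFN: one cannot place a ReLU threshold at a fixed constant, so the threshold must be manufactured dynamically from $y_3 = c$ via cross-dimension weights. Once one checks that the ReLU pre-activations land strictly inside the intended zero or linear regions for all admissible $a,b \in [0,1]$, the remainder is pure matrix arithmetic. A secondary subtlety is the convention that the residual attention block is set to the identity and layer norm is ignored, as noted in the theorem's preamble; these are benign assumptions but should be stated explicitly before the construction.
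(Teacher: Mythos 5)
Your construction is correct and follows essentially the same strategy as the paper's proof: an explicit choice of $\A$ (in fact the identical assignment matrix) together with a bias-free ReLU gate of the form $\text{ReLU}(y_i - y_3)$ that uses the conditioning entry $c$ as a dynamic threshold to switch the row-wise FFN between a symmetric and an asymmetric readout. The only substantive difference is where the symmetry-breaking freedom lives --- you encode it in the diagonal token layout plus distinct output weights, whereas the paper encodes it in distinct embedding rows $v_1 \neq v_2$ of $\Z$ --- and both verifications go through by the same direct substitution.
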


\begin{proof}
First we construct our $\text{P2T}$ parameters. We set the parameter embeddings $\Z$ as follows:

$$
\Z =
\begin{bmatrix}
  u & v_1 & 0 \\
  u & v_2 & 0 \\
  0 & 0 & 1
\end{bmatrix},
$$

\noindent for $u, v_1, v_2 \in [0, 1]$.
Note the assumption that $a, b, c, u, v_1, v_2 \in [0, 1]$ is made without loss of generality for ease of construction.
In practice, values on intervals that include negative values can be shifted to the positive real line by bias terms.

We set $h_\theta$ such that its weights are identity matrices and assume a ReLU activation. Thus, because we have enforced positivity $h_\theta(\text{diag}(\x)\cdot\Z)=\text{diag}(\x)\cdot\Z$.

We then construct the assignment matrix $\A$:

$$
\A =
\begin{bmatrix}
  1 & 0 & 1 \\
  0 & 1 & 1
\end{bmatrix}.
$$

\noindent Then, finally, we have:

$$
\X_0 = \text{P2T}(\x) = \begin{bmatrix} 
  a u & a v_1 & c \\
  b u & b v_2 & c \\
\end{bmatrix}.
$$.

Let $\mathbf{W}_\text{out}=I_3$. Let 

$$
\mathbf{W}_\text{in}=
\begin{bmatrix}
  1 & 0 & 0 \\
  0 & 1 & 0 \\
  0 & -1 & 0
\end{bmatrix}.
$$

\noindent Then
$$
\X_0 \cdot \mathbf{W}_\text{in}= \begin{bmatrix}
  a u & a v_1 - c & 0 \\
  b u & b v_2 - c & 0
\end{bmatrix}.
$$

\noindent Then for $c=1$ we have:

$$
\mathbf{Y}_1 = \text{FFN}(\X_0)\rvert_{c=1}
= 
\begin{bmatrix}
  a u & 0 & 0 \\
  b u & 0 & 0 \\
\end{bmatrix}
$$

\noindent and for $c=0$ we have:

$$
\mathbf{Y}_0 = \text{FFN}(\X_0)\rvert_{c=0} = \begin{bmatrix}
  a u & a v_1 & 0 \\
  b u & b v_2 & 0
\end{bmatrix}.
$$

We now apply the permutation operators $\pi$ and $\rho$ to test our construction.

$$
\rho(\mathbf{Y}_1) = 
\begin{bmatrix}
  b u & 0 & 0 \\
  a u & 0 & 0 \\
\end{bmatrix}
$$

\noindent while

\begin{align}
  \nonumber
  \text{FFN}\circ\text{P2T}\circ\pi(\x)\rvert_{c=1} &=
\text{FFN}\left(
\begin{bmatrix}
  b u & b v_1 & 1 \\
  a u & a v_2 & 1 
\end{bmatrix} 
  \right) \\\nonumber
  &=
\begin{bmatrix}
  b u & 0 & 0 \\
  a u & 0 & 0 \\
\end{bmatrix}.
\end{align}

\noindent Meaning $\text{FFN}\circ\text{P2T}\circ\pi(\x) = \rho\circ\text{FFN}\circ\text{P2T}(\x)$ and we have satisfied the $c=1$ case.

Meanwhile, for $c=0$:

$$
\rho(\mathbf{Y}_0) = 
\begin{bmatrix}
  b u & b v_2 & 0 \\
  a u & a v_1 & 0 \\
\end{bmatrix}
$$

\noindent
\begin{align}
  \nonumber
  \text{FFN}\circ\text{P2T}\circ\pi(\x)\rvert_{c=0} &=
\text{FFN}\left(
\begin{bmatrix}
  b u & b v_1 & 0 \\
  a u & a v_2 & 0 
\end{bmatrix} 
  \right) \\\nonumber
  &=
\begin{bmatrix}
  b u & b v_1 & 0 \\
  a u & a v_2 & 0 \\
\end{bmatrix}.
\end{align}

\noindent Clearly, $bv_2 \neq bv_1$ and $av_2 \neq av_1$, and so we have satisfied that $\text{FFN}\circ\text{P2T}\circ\pi(\x) \neq \rho\circ\text{FFN}\circ\text{P2T}(\x)$ when $c=0$.
We have therefore constructed a conditional equivariance, so we are done.
\end{proof}

\section{Models, hyperparameters, and training details}\label{apdx:train}

\subsection{$k$-osc Models}

\begin{figure*}
  \begin{center}
    \includegraphics[width=\textwidth]{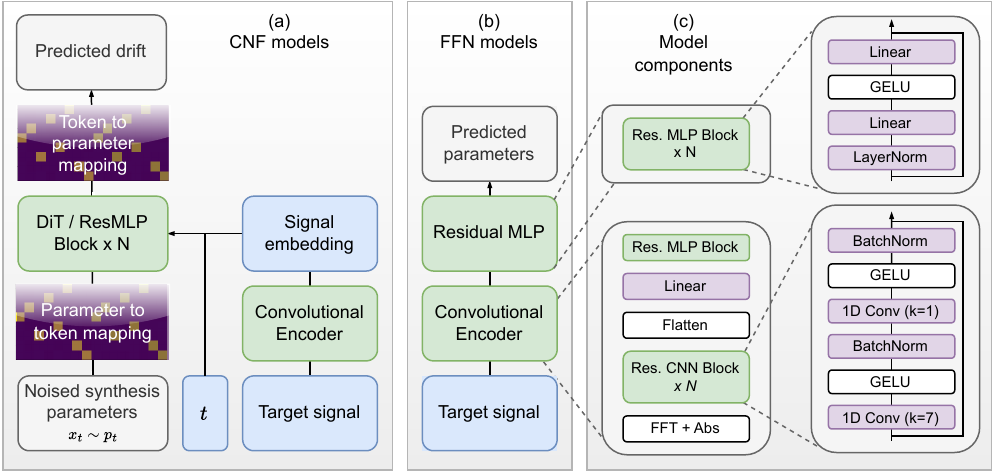}
  \end{center}
  \caption{Diagram of models used in the $k$-osc task.}\label{fig:kosc-models}
\end{figure*}

All models were trained for 400k steps using the Adam optimizer with a learning rate of $10^{-4}$. All other optimizer parameters were set to their PyTorch defaults.
Models were trained with a batch size of 1024.
All test metrics were computed on a held-out test dataset of $51,200$ samples.
Training was conducted with BFloat16 mixed precision. Gradient norms were clipped to a maximum value of 1.0.

\subsubsection{CNN Encoder}

All models used in the $k$-osc experiment use the same CNN architecture to extract a representation of the synthesizer signal.
This is illustrated in Fig.~\ref{fig:kosc-models}.

The encoder starts by computing the discrete Fourier transform of the signal and taking its complex modulus. There is then a stack of 4 residual CNN blocks, which are interspersed by a strided convolution which downsamples by a factor of 3 while doubling the channel dimension. The dimension of the first block is 24.
The final feature map is flattened and passed to a linear layer, and then a residual MLP block, producing a 128 dimensional vector which is passed to the rest of the network.


\subsubsection{CNF models}

These models are illustrated in Fig.~\ref{fig:kosc-models}(a).
\cnfp{} and \cnfe{} both use a diffusion transformer (DiT) vector field network with 5 layers.
Each layer contains two residual blocks.
The first consists of layer normalization followed by a multi-head self attention module with 4 heads and a token dimension of 128.
The second consists of a further layer normalization followed by a feed-forward network (FFN) with an input/output dimension of 128, a hidden dimension of 256, and a GELU activation.

\cnfm{} replaces the DiT with a residual multi-layer perceptron (MLP) constructed by simply omitting the attention residual blocks leaving only residual FFN layers.
Note that the MLP is applied across the full parameter vector rather than tokenwise --- that is, it does not have an intrinsic equivariance to permutation.
We use a model dimension of 192 and a total of 7 layers.

\textbf{Input projection: }
\cnfe{} uses the natural tokenization of parameters implied by the synthesizer's structure.
That is, one oscillator's parameters are assigned to each token.
These 3-dimensional vectors are then projected up to the model dimension by a linear layer.
This naturally yields $k$ internal model tokens.

\cnfp{} uses the \ptot{} projection, where the number of model tokens is a tunable hyperparameter corresponding to the ``available'' degree of symmetry.
While in future work we intend to study the effect over under- and over-specifying this parameter, here we are focused simply on the ability of the model to discover the symmetry, and so we simply set it to $k$, assuming this is known \textit{a priori}.

\cnfm{} simply projects the parameter vector up to the model dimension with a linear layer.

\textbf{Conditioning: }
There are two sources of conditioning: the CNN's output, and the flow matching time step.
The latter is represented as a simple scalar value, which is concatenated to the CNN's output, giving a 129 dimensional conditioning vector.

We use adaptive layer normalization (Ada-LN) to introduce conditioning information to the network. 
This means that after each layer normalization, a pointwise affine transformation is applied, with shift and scale parameters produced by a separate FFN per layer applied to a conditioning vector.
The output of the non-residual pathway of each residual block is also scaled by a further parameter produced by the conditioning FFN.
For \cnfp{} and \cnfe{}, this means the layer-wise conditioning FFN has a total output dimension of $6 \times 128 = 768$, which is split to produce the affine parameters.
For \cnfm{} it outputs a $3\times 128 = 384$ dimensional vector as there is only one residual block per layer.
Before being passed to each layer, the conditioning vector is also passed through a shared FFN.
Note that, unlike prior work, we find that an Ada-LN-Zero initialization (where the conditioning FFN is initialized to leave the main network's activations unchanged) actually harms performance, and so we do not use it here. 

With a probability of 10\%, we replace the CNN conditioning vector with a learned ``dropout'' vector.
Doing so allows us to use classifier-free guidance~\cite{ho_classifier-free_2021,zheng_guided_2023} at inference, which we apply with a scale of 2.0, determined through experimentation to provide reasonable performance across models.
Future work will offer a more comprehensive study of these design parameters.

\subsubsection{Regression models}

These are illustrated in Fig.~\ref{fig:kosc-models}(b).
All use the same architecture and vary only in their loss computation.

The output of the CNN is passed directly to a residual MLP.
Each residual MLP block consists, again, of a residual block containing a layer normalization, followed by a linear layer, a GELU activation, and a final linear layer before adding the residual signal.
The residual MLP head contains of 4 layers with a width of 384.
A final linear layer is used to project the MLP's output back to the dimension of the parameter vector.

\textbf{Loss functions ---}
\ffnm{} simply regresses parameters using a mean-squared error loss. This means the model does not account for symmetry at either the architectural or loss level.

\ffns{} adopts the sorting strategy for symmetry breaking adopted in prior sound matching~\cite{masuda_improving_2023} and EQ matching~\cite{nercessian_neural_2020} work.
In this approach, the target parameter vector is modified by sorting its entries according to oscillator frequency.
In this sense, the symmetry of the problem with respect to any individual parameter vector is effectively broken --- there exists a canonical, sorted form.
However, across the space of all possible parameter vectors, prior theoretical work~\cite{hayes_responsibility_2023} tells us that this does \textit{not} resolve the responsibility problem, as similar \textit{sets} of parameters can be selected that lead to big differences between parameter \textit{vectors}.
We thus expect improved performance over the \ffnm{} baseline --- there will no longer be a risk of conflicting gradients with respect to any one example --- but degraded performance compared to models which account for the symmetry at an architectural level, as the responsibility problem still exists.
We do not include the \ffns{} variant on the asymmetric version of the task, as targets are effectively already implicitly sorted.

We use \ffnc{} to study the effect of permutation invariant loss functions.
Like with \ffns{}, we expect this to improve over \ffnm{} by reducing conflicts in the learning signal, but to still suffer degraded performance due to the responsibility problem.
We use a differentiable Chamfer distance~\cite{barrow_parametric_1977} as our loss function, defined between two sets $\mathcal{X}$ and $\mathcal{Y}$ as:

\begin{equation}
  \mathcal{L}_\text{CD}(\mathcal{X}, \mathcal{Y}) =
  \sum_{\y\in \mathcal{Y}}\min_{\x\in \mathcal{X}} d(\x, \y) +
  \sum_{\x\in \mathcal{X}}\min_{\y\in \mathcal{Y}} d(\x, \y).
  \label{eq:chamfer}
\end{equation}

Here, we adopt the same mapping to set elements as \cnfe{} and define $d$ as the mean squared error.
Losses based on the Chamfer distance are commonly used as an approximation to the true linear assignment distance, which requires online solution of the corresponding combinatorial optimization problem.
A Chamfer-style loss was used by Engel, et al.~\cite{engel_self-supervised_2020} two match the frequencies of sets of sinusoids.

\subsection{Surge XT models}

We train all models for 1 million steps.
For all models except \vae{}, we use the Adam optimizer, with an initial learning rate of $10^{-4}$, which is decayed to $10^{-6}$ using the cosine learning rate schedule.
For \vae{}, we adopt an initial learning rate of $2\times 10^{-4}$. 
All models were trained with a batch size of 128 samples.
Again, we train under BFloat16 mixed precision, and clip gradient norms, though on this task we found a maximum norm of 0.5 to better ensure stability.




\subsubsection{CNF models}

\begin{figure}[t]
\centerline{\includegraphics[width=\columnwidth]{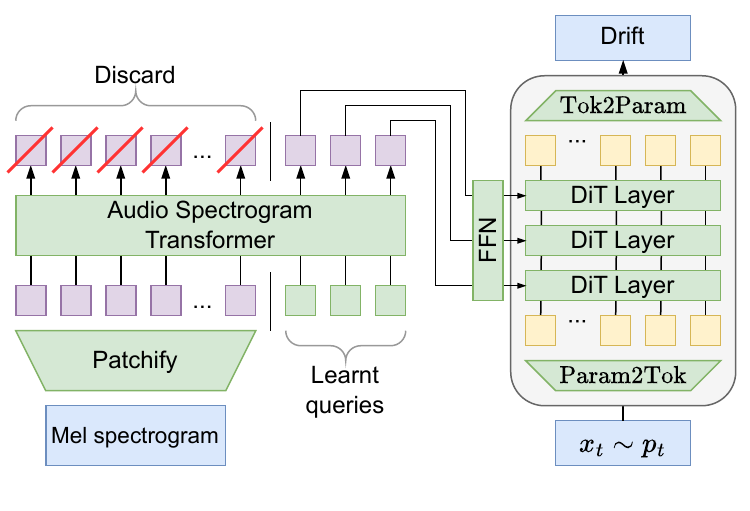}}
  \caption{We extract audio representations with an audio spectrogram Transformer~\cite{gong_ast_2021} encoder which condition an approximately equivariant continuous normalizing flow.}\label{fig:surge}
\end{figure}

The vector field architecture of the \cnfp{} and \cnfm{} models is effectively equivalent to their counterparts in the $k$-osc task, with two changes.
First, the time variable of the conditional vector field is represented via a 256 dimension sinusoidal positional encoding.
Second, model capacity is increased by increasing the model dimension and FFN hidden size to 512, using 8 attention heads and 8 total layers.
For \cnfp{}, illustrated in Fig.~\ref{fig:surge}, the number of model tokens is set to 128.

Both \cnfp{} and \cnfm{} receive conditioning from an audio spectrogram transformer~\cite{gong_ast_2021} encoder, adapted to use pre-normalization~\cite{xiong_layer_2020}.
This encoder has a model dimension of 512 and again uses 8 attention heads. There are 8 layers in the encoder.
The input is converted to patches using a CNN encoder following the original AST implementation, with a patch size of 16 and a stride of 10.

Unlike in the $k$-osc task, where each layer of the vector field received the same conditioning vector, here we produce a distinct conditioning vector for each layer by passing 8 separate learned embeddings through the AST along with its input tokens.
Outputs from these tokens are then distributed to each layer.

\subsubsection{AST model}

Our regression baseline uses an AST with an MLP prediction head, as proposed by Bruford et al~\cite{bruford_synthesizer_2024}.
It uses effectively the same architecture as the AST encoder for the CNF models, but with increased capacity.
Specifically, there are 12 layers, each with a dimension of 768 and 12 attention heads.
The prediction head adds an extra residual MLP block before a final linear projection to the output dimension.

\subsubsection{VAE + RealNVP}

We adopt the same training objectives and overall formulation as the model proposed by Le Vaillant et al~\cite{vaillant_improving_2021}.
The model consists of a variational autoencoder trained to reconstruct log-scaled Mel spectrograms.
The latent posterior distribution is enriched by a series of RealNVP~\cite{dinh_density_2017} normalizing flow layers.
As proposed by Esling et al~\cite{esling_flow_2020}, the model also regresses synthesizer parameters using a further sequence of RealNVP layers applied to the output of the latent flow layers.
However, rather than training by maximum likelihood, this ``parameter flow'' is trained by a simple regression loss --- a formulation referred to as \textit{regression flow}~\cite{esling_flow_2020}.

In order to make a fair comparison, we match the capacity of the model to our CNF and \st{} architectures.
To do so, we increase the capacity of each convolutional block by adding extra depth-wise separable convolutions between downsampling/upsampling convolution layers.
These are implemented with residual connections to help gradient flow in the deeper model.
We also replace the channel-wise convolutional network in the authors' implementation with one network that accepts a two channel (stereo) input.
Finally, we increase the width of the networks used inside the RealNVP flow layers from 300 to 512, and use 15 parameter regression flow layers instead of 6.
While these changes represent a departure from the original architecture, we find that the regression collapse we observe occurs with or without these adaptations.
In fact, they appear to slightly improve the final metric values.

For the $\beta$ parameter, which weights the contribution of the latent Kullback-Leibler divergence term in the ELBO loss, we adopt a warmup schedule similar to the one used in the original paper.
However, we proportionally extend the warmup period to correspond to the number of training steps we use.
Thus, $\beta$ increases from 0.1 to 0.2 over 60k steps.

\subsection{\ptot{} initialization}\label{apdx:ptot}

As mentioned in the manuscript, we observed empirically that \ptot{} converges much more reliably to a solution with an equivariance matching the synthesizer's symmetries when it is initialized to be \textit{almost invariant} to permutations.
Almost invariance of the assignment matrix $\A$ is preferable to full invariance, in this situation, as otherwise all parameter tokens would receive equal gradients and the model would fail to distinguish between parameters.

Our specific invariance scheme is as follows.
First, we initialize entries $a_{ij}$ of $\A$ to $a_{ij}\sim\mathcal{N}(\left(nk\right)^{-\frac{1}{2}}, \sigma^2_\A)$, where we set $\sigma^2_\A=10^{-4}$.
Then, to initialize the parameter representations $\Z$ and $\Z^\prime$, we first sample a single vector $\mathbf{m} \sim \mathcal{N}\left(0, d^{-\frac{1}{2}}I\right)$ and use this to sample $\mathbf{z}_i \sim \mathcal{N}\left(m, \sigma^2_\Z I\right)$, for $\sigma^2_\Z=10^{-4}$.
We find that first initializing $\mathbf{m}$ helps with the rate and reliability of symmetry discovery.
We hypothesize that this may be due to an interaction with layer normalization in the vector field network --- if parameter representations are centered around zero, then layer normalization will more dramatically magnify the differences between them than if they share an offset.
We set $\Z^\prime = \Z$ at the start of training, but optimize these parameters independently.

Intuitively, we wish to enforce a sparse assignment of parameter tokens to model tokens to encourage the learned equivariance to follow the symmetry structure of the synthesizer.
That is, we want to minimize unnecessary "leakage" between subsystems.
We thus adopt an $\mathcal{L}_1$ penalty on the entries in the $\A$ matrix, which we weight by a factor of 0.01.

While these initialization and regularization strategies were determined through a combination of problem insight and preliminary experimentation, we do find them necessary to achieve good performance.
We intend, in our next publication, to include detailed ablations and parameter sweeps to draw better insight into their behaviour.

\section{Metrics}\label{apdx:metrics}

Here we provide full details of metrics used in the $k$-osc and Surge XT experiments.

\subsection{$k$-osc metrics}

\textbf{LAC: }
The linear assignment cost is defined between two sets of points by finding the minimum cost attainable by assignments between them.
This is achieved by solving the linear assignment problem, typically using the Hungarian or Jonker-Volgenant algorithms.
For ground truth parameters $\x$ and predicted parameters $\hat{\x}$, and permutations of oscillators $\pi \in S_k$, the cost is defined:

\begin{equation}
  \text{LAC} = \frac{1}{3k}\min_{\pi \in S_k}  \lVert\x - \pi \hat{\x}\rVert_2^2.
  \label{eq:lac}
\end{equation}

\noindent
The resulting metric is thus a permutation invariant distance between two sets of parameters, because regardless of how parameters are permuted, the minimum cost remains unchanged.

\textbf{MSE: }
On the asymmetric variant of $k$-osc, the LAC would be an inappropriate metric as it would reward incorrect predictions.
We thus use the straightforward mean squared error between the predicted and ground truth parameter vectors in its place.

\textbf{LSD: }
The log-spectral distance measures the pointwise dissimilarity between the log-scaled magnitude spectra of two signals.
Here, given ground truth signal $\y$ and predicted signal $\hat{\y}$, we compute the discrete Fourier transforms $\Y$ and $\hat{\Y}$.
Then the LSD is given by:

\begin{equation}
  \text{LSD} = 
  \sqrt{\frac{1}{N}
  \sum_{n=1}^N
  \left(
  \log |Y_n| - \log |\hat{Y}_n|
  \right)^2
  }.
  \label{eq:lsd}
\end{equation}

\subsection{Surge XT metrics}

While previous evaluations of synthesizer inversion have included parameter-space distances, here we argue that when symmetries are present such metrics are uninformative at best and misleading at worst.
To see why, consider a ground truth parameter vector $x=\begin{bmatrix}u & v\end{bmatrix}^T$ where $u, v \in \mathbb{R}^d$ and a synthesizer that is symmetric under permutations of $u$ and $v$.
For two predictions, $x^\prime=\begin{bmatrix}v & u\end{bmatrix}$ and $x^{\prime\prime}=\frac{1}{2}\begin{bmatrix}u + v & u + v\end{bmatrix}^T$, we see that when $u\neq v$ the $\mathcal{L}_2$ distance $\lVert x - x^\prime \rVert_2 = \sqrt{2}\lVert u - v \rVert_2$ is greater than the distance $\lVert x - x^{\prime\prime} \rVert_2 = \sqrt{2}\lVert \frac{1}{2}(u - v) \rVert_2$, despite the fact that $f(x^\prime)=f(x)$ while $f(x^{\prime\prime})\neq f(x)$.
In other words, the parameter metric \textit{penalizes} a valid solution while rewarding a bad solution.

A good parameter metric would, therefore, account for such symmetries and equivalences. However, due to the complex nature of audio synthesizers and the presence of quasi-symmetric structure, it is typically unfeasible to enumerate all possible symmetries and design such a metric.
We therefore, instead, rely on the fact that the synthesizer itself already encodes all the relevant symmetries and equivalences, and simply compute metrics directly between audio signals.

While this method itself is imperfect --- parameters differ in the magnitude of their influence on audio metrics --- we argue that it is still preferable because (i) it is easier to design metrics to compensate for these differences than it is to design parameter metrics which encode the necessary invariances, and (ii) we are ultimately concerned with achieving good audio similarity, and are okay with whichever parameter configuration actually does so. We therefore adopt the following suite of metrics.

\textbf{MSS: }
Multi-scale time-frequency representations have found common use as both training loss functions and metrics for audio similarity across a variety of audio processing tasks.
They typically involve multiple evaluations of a time-frequency transform, such as the STFT, with differing filter and stride lengths, selected to achieve sensitivity to spectrotemporal structures at differing scales.
Effectively, they allow heuristic compensation for the intrinsic trade-off between time and frequency resolution in designing a transform.

Here, we use log-scaled Mel spectrograms with window lengths of 10, 25, and 100 milliseconds, and corresponding hop lengths of 5, 10, and 50ms.
We use filterbanks with 32, 64, and 128 mel bands, respectively.
We compute the mean absolute error between corresponding spectrograms, and aggregate these by averaging across scales.

\textbf{wMFCC: }
Due to the intricate interactions between synthesizer parameters, it is possible for a slight error in one parameter to shift the synthesizer's output in time or frequency, thus leading to simple spectrotemporal metrics unfairly penalizing an otherwise good prediction.
Even a few milliseconds error on an amplitude envelope, or a few cents on an oscillator frequency, could cause a signal which is almost indistinguishable perceptually from the ground truth to suffer a poor MSS value.

Thus, we include a \textit{warped Mel-frequency cepstral coefficient} metric as a more ``malleable'' alternative to straightforward spectral distances.
The metric is computed in two steps.
First, the first 20 MFCCs of both the ground truth and predicted signals are computed, with a hop length of 10ms.
As MFCCs are more sensitive to the spectral envelope than the absolute pitch of a signal, this allows for greater robustness to small pitch errors.
Secondly, the dynamic time warping algorithm is applied to the two MFCC series using the L1 distance, allowing for small timing deviations to be accommodated.
The cost of the optimal DTW path is then taken as the wMFCC value.

\textbf{SOT: }
While MSS and wMFCC are sensitive to fine-grained spectral similarity and overall timbral evolution, respectively, they do not provide good intuition about the distance ``along'' the frequency axis.
For example, consider a ground truth signal containing a single frequency component at 100Hz, and two predictions, one at 500Hz and one at 5kHz.
Ignoring artifacts due to spectral leakage, these predictions would yield equal MSS distances from the ground truth, as the L1 distance is effectively a pointwise comparison.
While previous works on musical audio synthesis~\cite{hantrakul_fast_2019,engel_ddsp_2020} have employed fundamental frequency estimators such as CREPE~\cite{kim_crepe_2018} to compute a pitch similarity metric, we believe this would be unsuitable for the audio in our dataset, which is frequently aharmonic, aperiodic, or exhibits characteristics well outside the training distribution of common pitch estimators.

We thus adopt the \textit{spectral optimal transport} (SOT) distance, first proposed as a differentiable loss function for optimizing frequency parameters~\cite{torres_unsupervised_2024}, as a more permissive metric for similarity along the frequency axis.
The SOT distance follows the intuition that a greater difference between two frequencies should correspond to a larger distance.
It is computed as the Wasserstein-1 distance between normalized magnitude spectra:

\begin{equation}
  \mathcal{W}_1(\mu, \nu) = \inf_{\pi\in\Gamma(\mu, \nu)} \int |x - y| d\pi(x, y),
\end{equation}

\noindent where $\mu$ and $\nu$ are probability measures, here defined as a sum of Dirac measures located at Fourier harmonic frequencies and weighted by their magnitudes in the ground truth and predicted spectra respectively, and $\Gamma$ is the set of all couplings between elements in $\mu$ and $\nu$.
The infimum thus corresponds to the optimal transport cost between the two measures. In this one-dimensional case, a simple closed form solution is available.

\textbf{RMS: }
Finally, to compare the overall amplitude evolution of the predicted signal to the ground truth, we compute the framewise root mean squared energy envelopes of the two signals and take their cosine similarity.
In this way, we are able to compare the manner in which signal amplitude evolves over time while remaining invariant to overall amplitude scaling errors.

\section{Learnt \ptot{} parameters}

We provide plots of the assignment matrix $\A$ and parameter in/out embeddings $\Z$ and $\Z^\prime$ learned by our $\ptot{}$ module on the $k$-osc and Surge XT tasks.
The assignment matrices have been sorted lexicographically by row (where each row corresponds to a token), to allow easier identification of patterns.
Axes are labelled by groups of parameters, where possible. For detailed information about which parameters are present, see Table~\ref{tab:surge}.

Figs~\ref{fig:ass_kosc} and~\ref{fig:ass_kosc_asym} display the assignment matrices learned by the \ptot{} module on the symmetric and asymmetric variants, respectively, of the $k$-osc task for $k=8$.
We see clearly that, on the symmetric task, the matrix organizes the parameters such that each oscillator is represented by a single token. 
This is equivalent to the tokenization scheme of the \cnfe{} variant.
Conversely, the asymmetric task leads to a disorganized assignment matrix where there is no clear correspondence between oscillators and tokens.

Figs~\ref{fig:embed_kosc} and~\ref{fig:embed_kosc_asym}, again symmetric/asymmetric respectively, illustrate the learned parameter embeddings.
In particular, these are represented as self-similarity matrices.
Again, the symmetric task leads to clear structure. In fact, it appears the same vector (up to scaling) has been learned for parameters within the same group (e.g. all frequencies are represented by the same vector regardless of oscillator).
The asymmetric variant, once more, leads to no clear organization.

Figs~\ref{fig:ass_simp} and~\ref{fig:embed_simple} illustrate the assignment and embedding matrices for the Simple variant of the Surge XT task.
While interpreting such an assignment matrix becomes challenging beyond a handful of parameters, we note that the embeddings show clearly that the model has learned to represent the internal structure of the synthesizer.
In particular, we see repeated patterns indicating self-similarity for symmetric components such as oscillators, filters, and LFOs.
We observe a similar phenomenon, at a larger scale, in Figs~\ref{fig:ass_full} and~\ref{fig:embed_full}, which illustrate the learned parameters on the Surge Full dataset.

\begin{figure*}[t]
  \includegraphics[width=\textwidth]{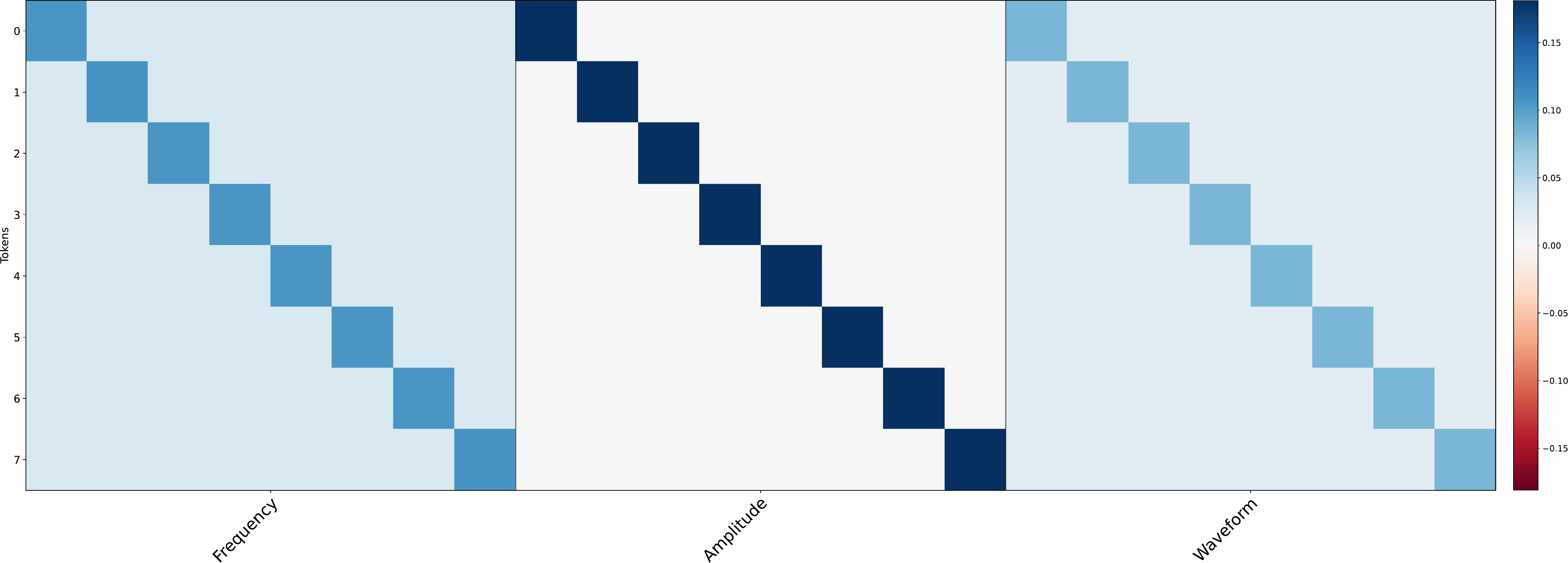}
  \caption{Learned \ptot{} assignment matrix $\A$ on \textbf{symmetric} $k$-osc task for $k=8$. Rows are lexicographically sorted by their indices sorted by descending bin value. The final matrix appropriately groups the parameters of each oscillator into an independent token.}\label{fig:ass_kosc}
\end{figure*}
\begin{figure*}[t]
  \includegraphics[width=\textwidth]{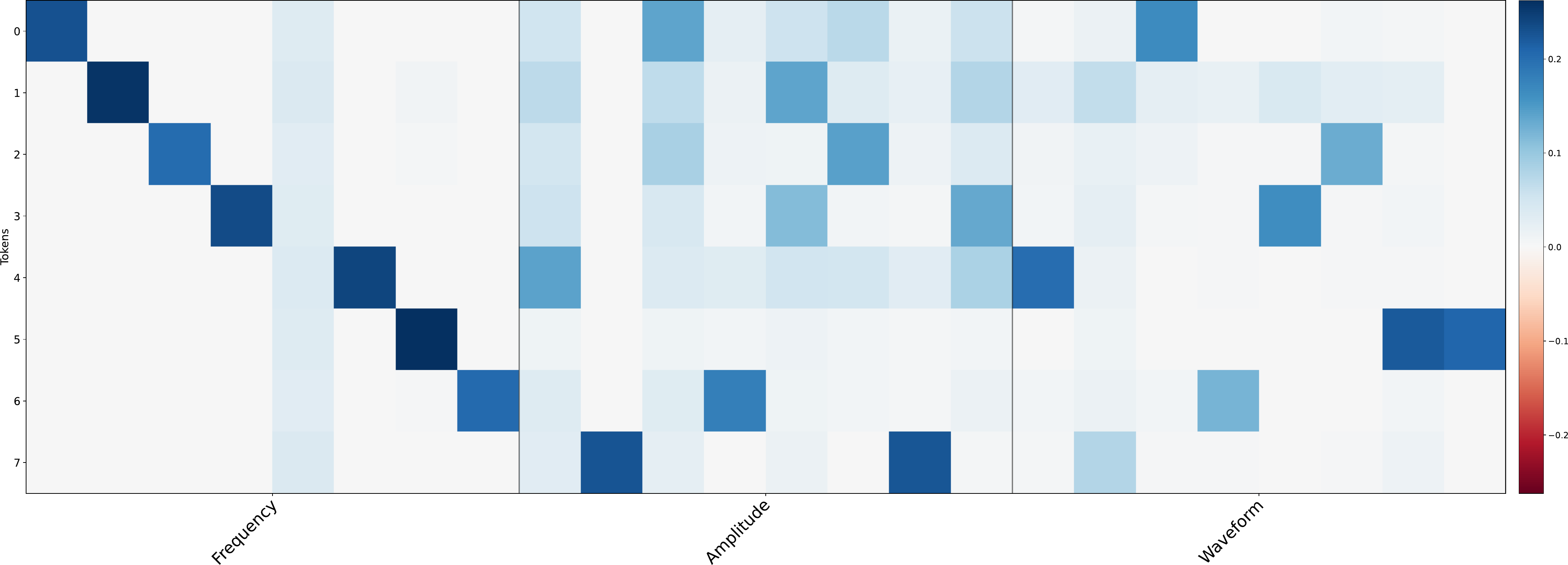}
  \caption{Learned \ptot{} assignment matrix $\A$ on \textbf{asymmetric} $k$-osc task for $k=8$. Rows are lexicographically sorted by their indices sorted by descending bin value.}\label{fig:ass_kosc_asym}
\end{figure*}
\begin{figure*}[t]
  \includegraphics[width=\textwidth]{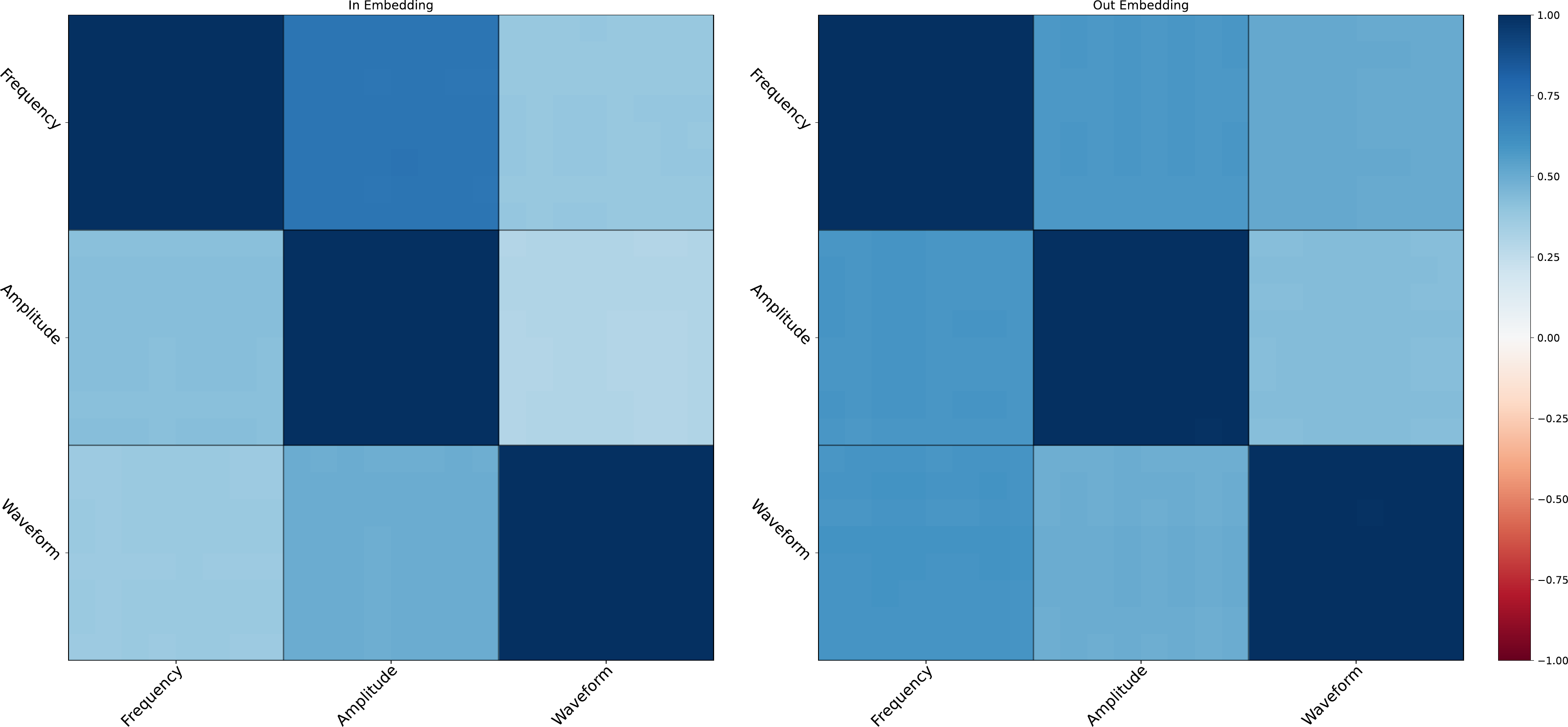}
  \caption{Cosine self-similarity of learned \ptot{} parameter embeddings $\Z$ and $\Z^\prime$ on \textbf{symmetric} $k$-osc task for $k=8$. Note that the blocks of solid colour are in fact $8\times 8$ squares of pixels --- the model has simply learned the same embedding for each parameter group.}\label{fig:embed_kosc}
\end{figure*}
\begin{figure*}[t]
  \includegraphics[width=\textwidth]{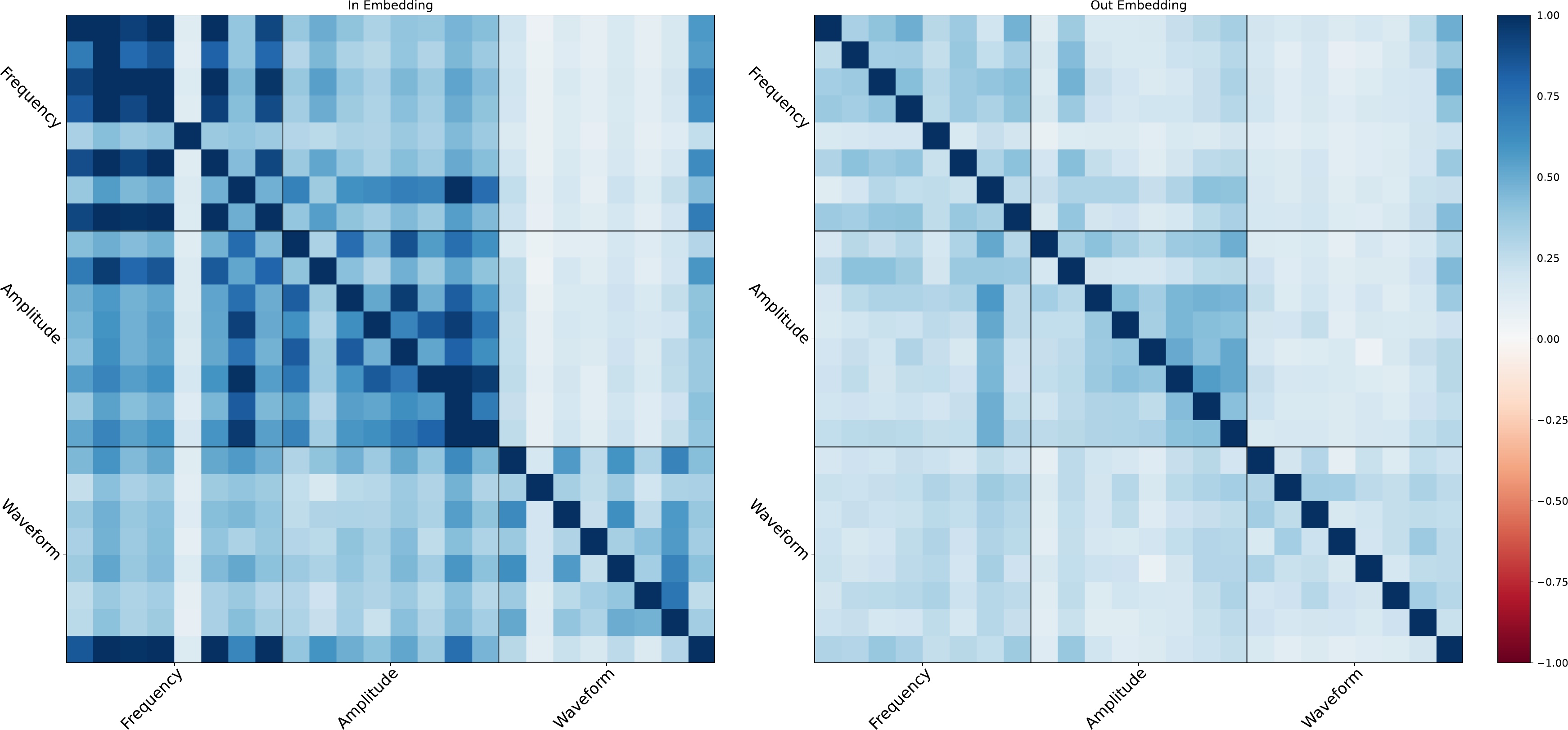}
  \caption{Cosine self-similarity of learned \ptot{} parameter embeddings $\Z$ and $\Z^\prime$ on \textbf{asymmetric} $k$-osc task for $k=8$. The model has not learned any meaningful symmetric structure.}\label{fig:embed_kosc_asym}
\end{figure*}

\begin{figure*}[t]
  \includegraphics[width=\textwidth]{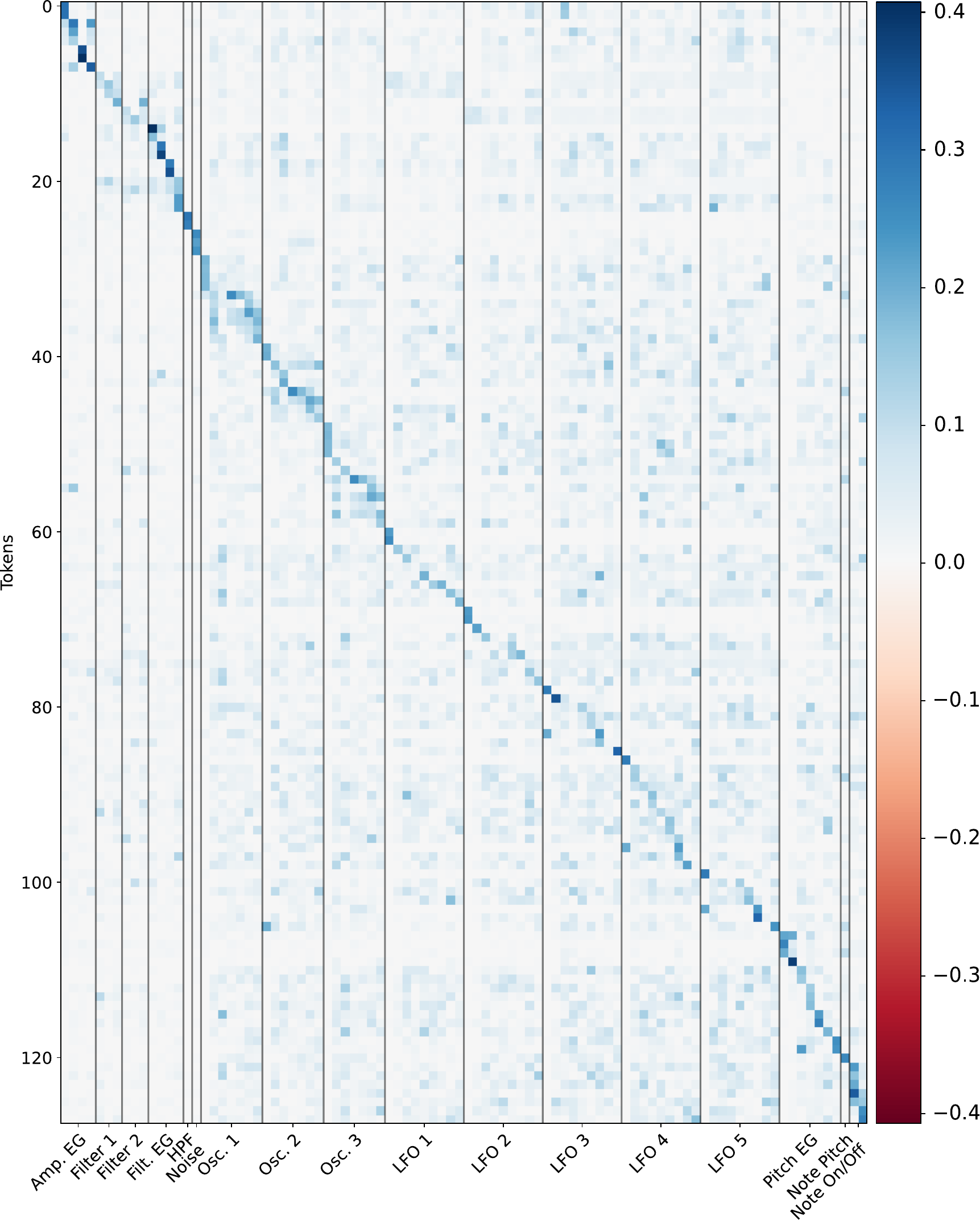}
  \caption{Learned \ptot{} assignment matrix $\A$ on \textit{Surge Simple} task. Rows are lexicographically sorted by their indices sorted by descending bin value.}\label{fig:ass_simp}
\end{figure*}
\begin{figure*}[t]
  \includegraphics[width=\textwidth]{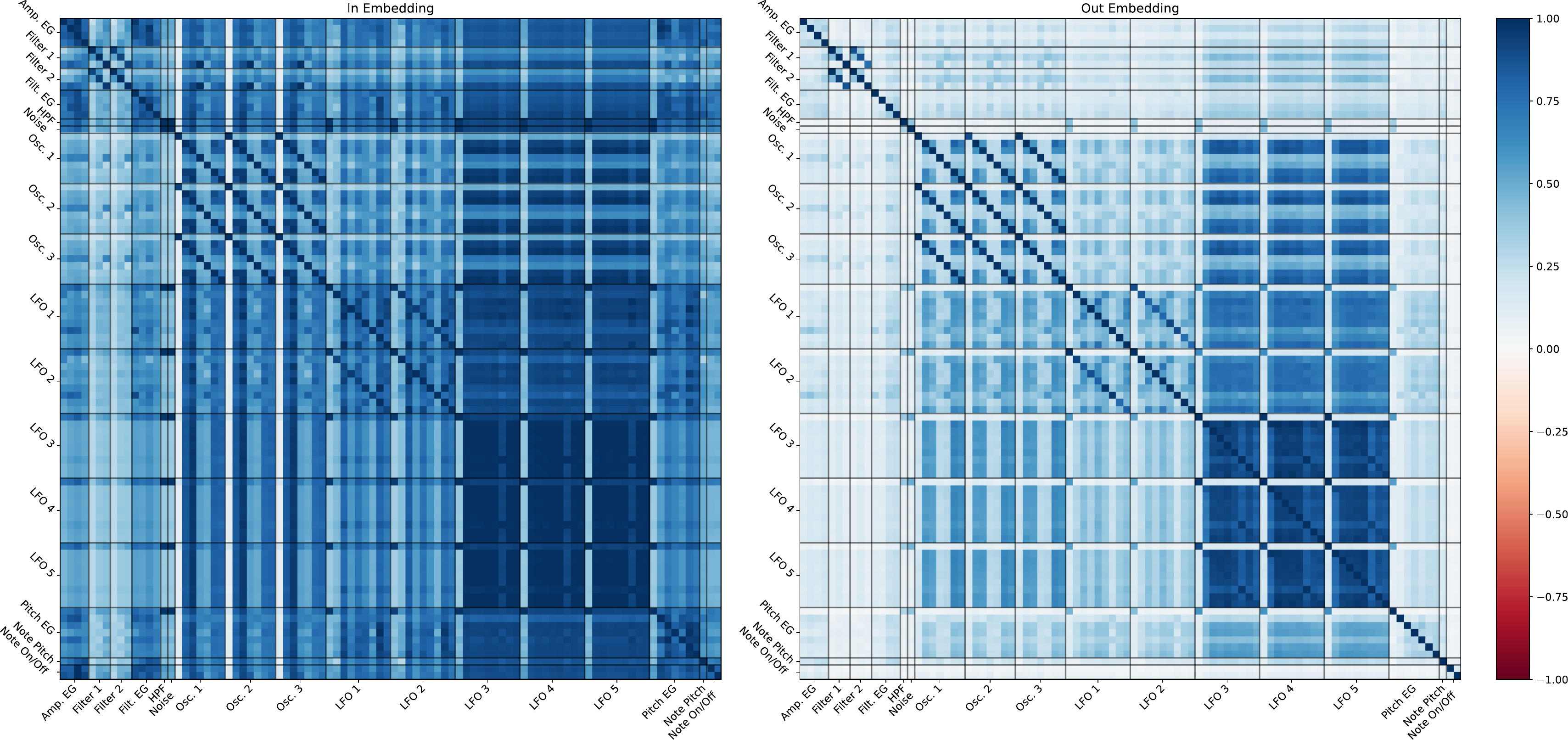}
  \caption{Cosine self-similarity of learned \ptot{} parameter embeddings $\Z$ and $\Z^\prime$ on \textit{Surge Simple} task. Note the repeated structures corresponding to symmetric subsystems of the synthesizer.}\label{fig:embed_simple}
\end{figure*}

\begin{figure*}[t]
  \includegraphics[width=\textwidth]{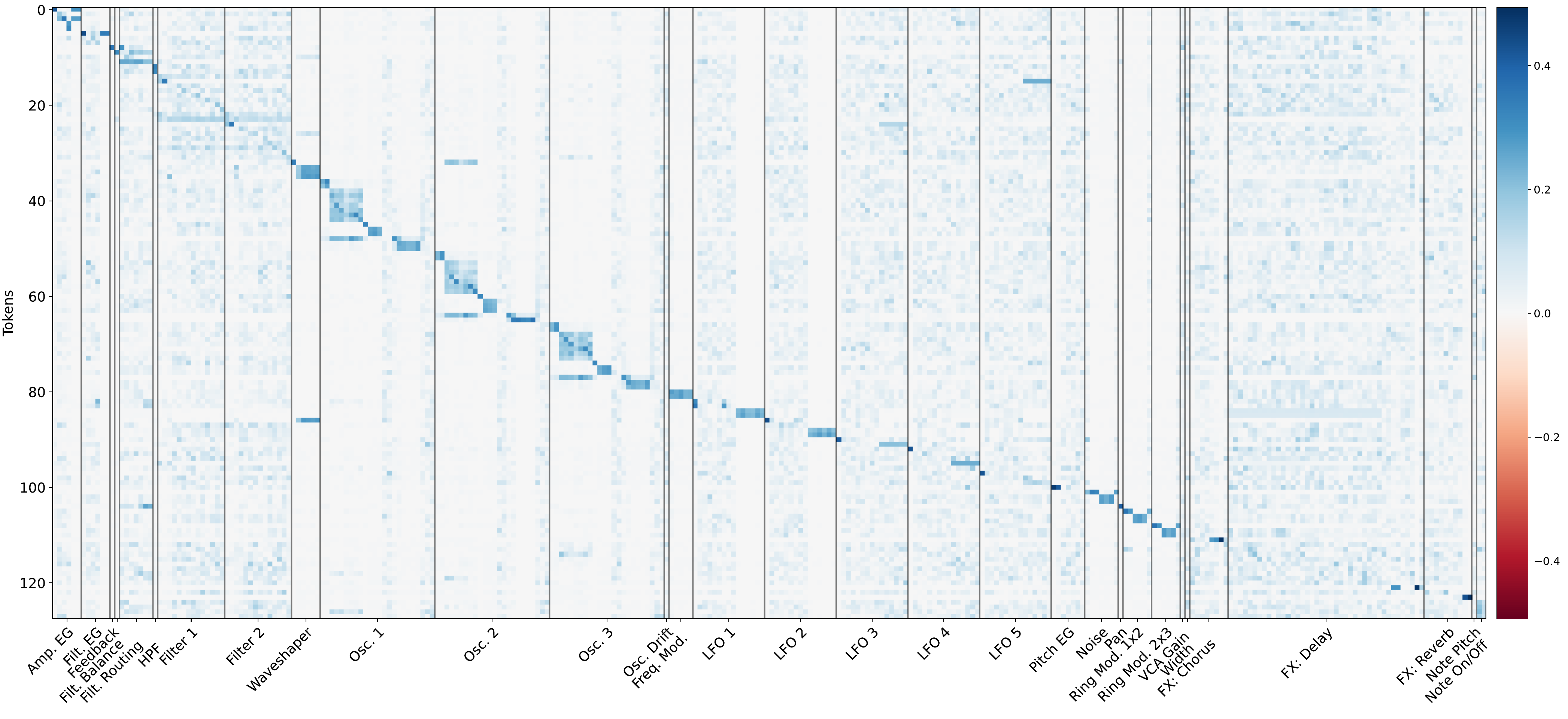}
  \caption{Learned \ptot{} assignment matrix $\A$ on \textit{Surge Full} task. Rows are lexicographically sorted by their indices sorted by descending bin value.}\label{fig:ass_full}
\end{figure*}
\begin{figure*}[t]
  \includegraphics[width=\textwidth]{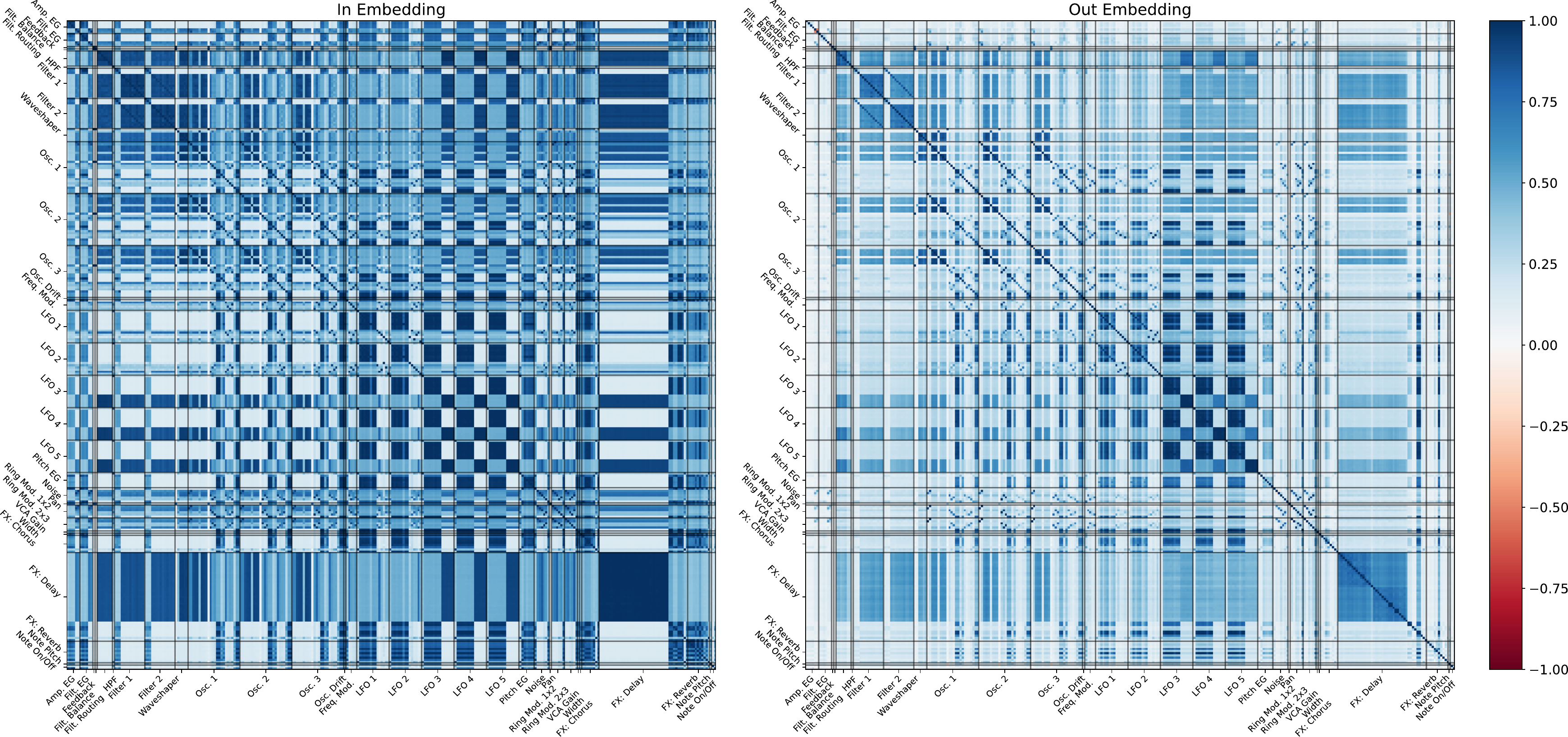}
  \caption{Cosine self-similarity of learned \ptot{} parameter embeddings $\Z$ and $\Z^\prime$ on \textit{Surge Full} task. Note the repeated structures corresponding to symmetric subsystems of the synthesizer.}\label{fig:embed_full}
\end{figure*}



\section{Surge XT dataset}

Surge XT is described as a ``hybrid'' synthesizer, which incorporates multiple different methods for producing sound, many of which are capable of introducing non-trivial permutation symmetries.

The synthesizer consists of multiple ``scenes'', each of which contains a copy of the synthesizer's sound generation engine. 
The outputs of each scene can be routed to a mixer for aggregation and further processing, while MIDI inputs to each scene's sound engine can be determined by different voice split and layering rules.
For the present experiment, we disable the second scene and focus on just one instance of the sound engine.

Due to the complexity of the audio effects module in Surge XT, and the wide variety of effect types,\footnote{Surge XT even contains a neural network as an audio effect!} we choose to adopt a fixed effect routing for our experiments, using only the Chorus, Delay, and Reverb II effects.

Similarly, due to the flexibility of Surge XT's modulation routing and large number of multi-functional modulation sources, we adopt a fixed routing which reflects common usage of LFOs and auxiliary envelopes.
Specifically, we map LFOs 1 and 2 to the cutoff frequencies of the two filters, and LFOs 3-5 to the pulse width parameters of the three oscillators.
We expose a limited subset of the mode parameters of each of these modulation sources, corresponding to a selection of different LFO and noise sources, and excluding envelope generators, sequencers, and programmable function generators.
We fix LFO 6 to the attack-decay-sustain-release (ADSR) envelope generator mode and map it to the global pitch modulation. 
We thus refer to it as "Pitch EG", and its amplitude as "Pitch EG Mod Amount". 

In general, we endeavoured to retain as many synthesis parameters as possible in the prediction task, deactivating only those which introduced an unnecessary redundancy (e.g. global volume and scene volume are equivalent in our configuration) or which are irrelevant to our particular experimental setup (e.g. filter keytracking, LFO trigger mode, pitch bend amount, etc.)
For each of the Simple and Full variants of the dataset, we construct a base preset which sets unused parameters to sensible defaults --- for example, we mute the contribution of scene B.
These presets are provided in the source code repository.
Details of the parameters used in both the Simple and Full versions of the Surge XT task are provided in Table~\ref{tab:surge}.

All parameters, including discrete values, are passed to the VST as floating point values on the interval $[0, 1]$.
These values are then internally scaled and quantized by the plugin.
For training, we rescale parameters to the interval $[-1, 1]$ in accordance with common practice for flow-based and diffusion-based models.
We note that parameter scaling is known to influence performance on sound matching~\cite{han_learning_2024} and will explore this in more detail in a subsequent publication.

For most continuous parameters we simply uniformly sample from the unit interval.
For most discrete parameters, we sample uniformly from the possible values, and use a one-hot representation.
Thus, each discrete parameter occupies as many spaces in the parameter vector as it has possible values.
In certain cases we increase the probability of selecting a particular value, where it is reasonable to expect that this value is the usual or ``default'' setting for this parameter. For example, each LFO's amplitude has 50\% probability of being 0, and will otherwise be sampled from the unit interval. Similarly, the waveshaper type is set to ``None'' with probability 50\%, and otherwise sampled uniformly from the list from remaining types.
We also truncate the sampling interval for certain parameter ranges where extreme values could cause issues with clipping (e.g. the upper bound for certain gain parameters is as high as +48dB), or where certain values do not make sense in our experimental setup (e.g. an amplitude envelope with an attack time of length 30s is unlikely to produce a usable sound given the length of our MIDI note event).
Rather than listing all sampling distributions here, we instead refer the reader to our source code repository for full details: \url{https://github.com/ben-hayes/synth-permutations} 

While previous sound matching work typically uses fixed MIDI note pitches and/or durations, here we opt to view these as predictable parameters in order to increase the adaptability of our system to a variety of different inputs.
We thus uniformly sample MIDI pitches from a two octave interval centered at C4 (MIDI note 60) and uniformly sample note-on and note-off event times within the 4 second window.


\onecolumn
\afterpage{%
  \begin{longtable}{rlccp{8.5cm}}
 \addlinespace[-\aboverulesep] 
 \cmidrule[\heavyrulewidth]{2-5}
    &
     Parameter &
     Simple &
     Full &
     Notes \\\cmidrule{2-5}
\ldelim\{{4}{*}[${}2\times$]\hspace{-1em} &
Filter $i$ Cutoff &
\cmark & \cmark &
 \\
 &
Filter $i$ FEG Mod. Amount &
\cmark & \cmark &
 \\
 &
Filter $i$ Resonance &
\cmark & \cmark &
 \\
 &
Filter $i$ Type &
\xmark & \cmark &
Options include low-, band-, and high-pass filters at 12dB and 24dB per octave, as well as notch, comb, and allpass filters. \\
\ldelim\{{10}{*}[${}5\times$]\hspace{-1em} &
LFO $j$ Amplitude &
\cmark & \cmark &
 \\
 &
LFO $j$ Attack &
\cmark & \cmark &
 \\
 &
LFO $j$ Decay &
\cmark & \cmark &
 \\
 &
LFO $j$ Deform &
\cmark & \cmark &
Continuously warps the shape of the LFO. \\
 &
LFO $j$ Hold &
\cmark & \cmark &
 \\
 &
LFO $j$ Phase &
\cmark & \cmark &
 \\
 &
LFO $j$ Rate &
\cmark & \cmark &
 \\
 &
LFO $j$ Release &
\cmark & \cmark &
 \\
 &
LFO $j$ Sustain &
\cmark & \cmark &
 \\
 &
LFO $j$ Type &
\xmark & \cmark &
Options include multiple waveform shapes, random sources, and envelope generators. \\
\ldelim\{{12}{*}[${}3\times$]\hspace{-1em} &
Osc. $k$ Mute &
\xmark & \cmark &
 \\
 &
Osc. $k$ Octave &
\xmark & \cmark &
 \\
 &
Osc. $k$ Pitch &
\cmark & \cmark &
 \\
 &
Osc. $k$ Pulse &
\cmark & \cmark &
 \\
 &
Osc. $k$ Route &
\xmark & \cmark &
Oscillator signal patch point. \\
 &
Osc. $k$ Sawtooth &
\cmark & \cmark &
 \\
 &
Osc. $k$ Sync &
\cmark & \cmark &
 \\
 &
Osc. $k$ Triangle &
\cmark & \cmark &
 \\
 &
Osc. $k$ Unison Detune &
\xmark & \cmark &
 \\
 &
Osc. $k$ Unison Voices &
\xmark & \cmark &
 \\
 &
Osc. $k$ Volume &
\cmark & \cmark &
 \\
 &
Osc. $k$ Width &
\cmark & \cmark &
 \\
 &
Amp. EG Attack &
\cmark & \cmark &
 \\
 &
Amp. EG Decay &
\cmark & \cmark &
 \\
 &
Amp. EG Env. Mode &
\xmark & \cmark &
Options are Digital and Analogue. These change the curvature of envelope segments. \\
 &
Amp. EG Release &
\cmark & \cmark &
 \\
 &
Amp. EG Sustain &
\cmark & \cmark &
 \\
 &
Feedback &
\xmark & \cmark &
Activated conditionally depending on Filter Routing. \\
 &
Filter Balance &
\xmark & \cmark &
Affects the mix between the two filters. \\
 &
Filter Routing &
\xmark & \cmark &
Selects from a number of internal routing templates that include various parallel/cascade configurations, feedback paths, and waveshaper positions. \\
 &
Filter EG Attack &
\cmark & \cmark &
 \\
 &
Filter EG Decay &
\cmark & \cmark &
 \\
 &
Filter EG Env. Mode &
\xmark & \cmark &
Same options as Amp. EG Env. Mode. \\
 &
Filter EG Release &
\cmark & \cmark &
 \\
 &
Filter EG Sustain &
\cmark & \cmark &
 \\
 &
FM Depth &
\xmark & \cmark &
Behaviour is conditional on FM Routing. \\
 &
FM Routing &
\xmark & \cmark &
Allows frequency modulation between the oscillators in different configurations. \\
 &
Highpass &
\cmark & \cmark &
As well as the two configurable filters, there is a global highpass filter. \\
&
Pitch EG Mod. Amount &
\cmark & \cmark & Pitch EG implemented via LFO 6 set to ADSR mode.
 \\
 &
Pitch EG Attack &
\cmark & \cmark &
 \\
 &
Pitch EG Decay &
\cmark & \cmark &
 \\
 &
Pitch EG Deform &
\cmark & \cmark &
 \\
 &
Pitch EG Hold &
\cmark & \cmark &
 \\
 &
Pitch EG Release &
\cmark & \cmark &
 \\
 &
Pitch EG Sustain &
\cmark & \cmark &
 \\
 &
Noise Color &
\xmark & \cmark &
 \\
 &
Noise Mute &
\xmark & \cmark &
 \\
 &
Noise Route &
\xmark & \cmark &
Noise signal patch point. \\
 &
Noise Volume &
\cmark & \cmark &
 \\
 &
Osc. Drift &
\xmark & \cmark &
Independently varies the pitch of all oscillators and their unison voices. \\
 &
Pan &
\xmark & \cmark &
 \\
 &
Ring Modulation 1x2 Mute &
\xmark & \cmark &
Ring modulation between oscillators is always running with fixed modulation depth, and can be mixed into the signal. \\
 &
Ring Modulation 1x2 Route &
\xmark & \cmark &
Ring Mod. signal patch point. \\
 &
Ring Modulation 1x2 Volume &
\xmark & \cmark &
 \\
 &
Ring Modulation 2x3 Mute &
\xmark & \cmark &
 \\
 &
Ring Modulation 2x3 Route &
\xmark & \cmark &
Ring Mod. signal patch point.  \\
 &
Ring Modulation 2x3 Volume &
\xmark & \cmark &
 \\
 &
VCA Gain &
\xmark & \cmark &
 \\
 &
Waveshaper Drive &
\xmark & \cmark &
 \\
 &
Waveshaper Type &
\xmark & \cmark &
 Choice between soft, hard, asymmetric, and sine waveshapers. \\
 &
Width &
\xmark & \cmark &
 \\
 &
Chorus Delay Feedback &
\xmark & \cmark &
 \\
 &
Chorus Delay Time &
\xmark & \cmark &
 \\
 &
Chorus EQ High Cut &
\xmark & \cmark &
 \\
 &
Chorus EQ Low Cut &
\xmark & \cmark &
 \\
 &
Chorus Modulation Depth &
\xmark & \cmark &
 \\
 &
Chorus Modulation Rate &
\xmark & \cmark &
 \\
 &
Chorus Output Mix &
\xmark & \cmark &
 \\
 &
Chorus Output Width &
\xmark & \cmark &
 \\
 &
Delay Time Left &
\xmark & \cmark &
 \\
 &
Delay Time Right &
\xmark & \cmark &
 \\
 &
Delay Feedback EQ Crossfeed &
\xmark & \cmark &
 \\
 &
Delay Feedback EQ Feedback &
\xmark & \cmark &
 \\
 &
Delay Feedback EQ High Cut &
\xmark & \cmark &
 \\
 &
Delay Feedback EQ Low Cut &
\xmark & \cmark &
 \\
 &
Delay Input Channel &
\xmark & \cmark &
 \\
 &
Delay Modulation Depth &
\xmark & \cmark &
 \\
 &
Delay Modulation Rate &
\xmark & \cmark &
 \\
 &
Delay Output Mix &
\xmark & \cmark &
 \\
 &
Delay Output Width &
\xmark & \cmark &
 \\
 &
Reverb EQ HF Damping &
\xmark & \cmark &
 \\
 &
Reverb EQ LF Damping &
\xmark & \cmark &
 \\
 &
Reverb Output Mix &
\xmark & \cmark &
 \\
 &
Reverb Output Width &
\xmark & \cmark &
 \\
 &
Reverb Pre-delay &
\xmark & \cmark &
 \\
 &
Reverb Buildup &
\xmark & \cmark &
 \\
 &
Reverb Decay Time &
\xmark & \cmark &
 \\
 &
Reverb Diffusion &
\xmark & \cmark &
 \\
 &
Reverb Modulation &
\xmark & \cmark &
 \\
 &
Reverb Room Size &
\xmark & \cmark &
 \\
&
Note Start &
\cmark & \cmark &
MIDI note parameter. \\
 &
Note End &
\cmark & \cmark &
MIDI note parameter. \\
 &
Note Pitch &
\cmark & \cmark &
MIDI note parameter. \\

 \cmidrule[\heavyrulewidth]{2-5}
 \addlinespace[-\belowrulesep] 
\rule{0pt}{0ex}\\
\caption{Parameters of the Surge XT synthesizer used in the \textit{Surge Simple} and \textit{Surge Full} datasets.\label{tab:surge}}
\end{longtable}
}

\clearpage
\twocolumn
\end{document}